\setlist[enumerate,1]{label={(\alph*)}} 
\tikzset{
	position/.style args={#1:#2 from #3}{at=(#3), shift=(#1:#2)},
	nnode/.style={circle, draw=yellow!30!white, fill=yellow!30!white,  thin, inner sep=0pt, minimum size=10pt},
	nsnode/.style={rectangle, draw=yellow!30!white, fill=yellow!30!white,  thin, inner sep=0pt, minimum size=10pt},
	nline/.style={line width=5pt, yellow!30!white},
	rnode/.style={circle, draw=black, fill=white,  thin, inner sep=0pt, minimum size=4pt},
	fnode/.style={circle, draw=black, fill=black,  thin, inner sep=0pt, minimum size=8pt},
	cnode/.style={circle, draw=white, fill=white,  thin, inner sep=0pt, minimum size=0pt},
	snode/.style={circle, draw=black, fill=black,  thin, inner sep=0pt, minimum size=4pt},
	line/.style = { draw, thick, -{stealth} },
	dline/.style = { draw, thick, -{stealth}, dotted },
	nonarrow/.style = { draw, thick},
}
\definecolor{hellblau}{rgb}{0.2,0.4,1} 
\definecolor{dunkelblau}{rgb}{0,0,0.8}
\definecolor{dunkelgruen}{rgb}{0,0.5,0}
\definecolor{green}{rgb}{0,0.6,0}
\definecolor{fillblack}{rgb}{0.95,0.95,0.95}
\theoremstyle{plain} 
\newtheorem{satz}{Satz}[] 
\newtheorem{theorem}[satz]{Theorem}
\newtheorem*{theorem*}{Theorem}
\newtheorem{lemma}[satz]{Lemma}
\newtheorem*{lemma*}{Lemma}
\newtheorem{corollary}[satz]{Corollary}
\theoremstyle{remark} 
\theoremstyle{definition} 
\newtheorem{definition}[satz]{Definition}
\newtheorem*{conjecture*}{Conjecture}
\newcommand{\red}{{\color{red}1}}
\newcommand{\green}{{\color{dunkelgruen}2}}
\newcommand{\blue}{{\color{blue}3}}
\title{Trees and co-trees in planar 3-connected graphs\\ An easier proof via Schnyder woods}
\author{Christian Ortlieb\\Institute of Computer Science\\University of Rostock\thanks{This research is supported by the grant SCHM 3186/2-1 (401348462) from the Deutsche Forschungsgemeinschaft (DFG, German Research Foundation).}
\and Jens M. Schmidt\\Institute of Computer Science\\University of Rostock\footnotemark[1]}
\begin{document}
\maketitle
\pgfdeclarelayer{background}
\pgfsetlayers{background,main}

	\begin{abstract}
	Let $G$ be a 3-connected planar graph. Define the \emph{co-tree} of a spanning
	tree $T$ of $G$ as the graph induced by the dual edges of $E(G)-E(T)$. The
	well-known cut-cycle duality implies that the co-tree is itself a tree.
	Let a \emph{$k$-tree} be a spanning tree with maximum degree $k$.
	
	In 1970, Grünbaum conjectured that every 3-connected planar graph
	contains a 3-tree whose co-tree is also a 3-tree. In 2014, Biedl \cite{B14}
	showed that every such graph contains a 5-tree whose co-tree is a
	5-tree. In this paper, we present an easier proof of Biedl's result
	using Schnyder woods.
\end{abstract}

\section{Introduction}
A fundamental theorem shown~1966 by Barnette~\cite{Barnette1966} states that every 3-connected planar graph contains a spanning 3-tree. The aforementioned conjecture of Grünbaum~\cite{Gruenbaum1970} asks to strengthen Barnette's theorem by bounding simultaneously the maximum degree of the co-tree to three. There has not been any progress on this specific question until Biedl~\cite{B14} proved~2014 the existence of a 5-tree whose co-tree is also a 5-tree.

While Biedl's proof uses canonical orderings, we use Schnyder woods to reproof her result. Schnyder woods evolved into a powerful tool during the last decades for graph drawing and the structure of 3-connected planar graphs~\cite{Aerts2015,Alam2015,F04GGAA,Felsner2001,Schnyder1990}. Although this tool has already become standard machinery, it still takes some effort to give the necessary definitions of Schnyder woods and their relatives such as ordered path partitions; we will give these in Section~\ref{sec_preliminaries}. However, using this machinery does not only allow for an easier and more standard notation, it also shortens and simplifies the proof itself a lot (in this paper, the proof starts in Section~\ref{sec_alt_proof}).

The roadmap of this proof is to define a candidate graph using Schnyder woods followed by alternative proofs of Lemmas~3, 4 and 5 of Biedl's paper~\cite{B14} using structural results on Schnyder woods. Finally, we use our candidate graph and apply Theorem~3 of~\cite{B14}, which effectively bypasses the more sophisticated parts of Biedl's proof.

\section{Preliminaries}\label{sec_preliminaries}
We use standard graph notation. Let $G$ be a graph that is simple, planar, 3-connected and comes with a fixed embedding into the plane (we say that $G$ is \emph{plane}). 
A \emph{half-edge} is an arc that starts at a vertex but has no defined end vertex. A \emph{suspension} $G^\sigma$ of $G$ is obtained by choosing three different vertices $r_\red$, $r_\green$ and $r_\blue$ that appear in clockwise order on the outer face of $G$ and by adding a half-edge adjacent to each of those vertices that reaches into the outer face of $G$. The special vertices $r_\red$, $r_\green$ and $r_\blue$ are called \emph{roots}.

\begin{definition}\label{def:Schnyderwood}
Given a suspension $G^\sigma$, a \emph{Schnyder wood} rooted at $r_\red$, $r_\green$ and $r_\blue$ is an orientation and coloring of the edges of $G^\sigma$ (including half-edges) with colors \red, \green\ and \blue\ satisfying the following conditions (see Figures~\ref{fig:SchnyderWoodCondition} and \ref{fig_example_SW}).
\begin{enumerate}
	\item Every edge $e$ is oriented in one direction (we say $e$ is \emph{unidirected}) or in two opposite directions (we say $e$ is \emph{bidirected}). Every direction of an edge is colored with one of the three colors \red, \green, \blue\ (we say an edge is $i$-\emph{colored} if one of its directions has $i$) such that the two colors $i$ and $j$ of every bidirected edge are distinct (we call such an edge $i$-$j$-\emph{colored}). Similarly, a unidirected edge whose direction has color $i$ is called $i$-\emph{colored}. Throughout the paper, we assume modular arithmetic on the colors \red, \green, \blue\ in such a way that $i+1$ and $i-1$ for a color $i$ are defined as $(i \mod 3) +1$ and $(i+1 \mod 3) +1$. For a vertex $v$, a uni- or bidirected edge is \emph{incoming} ($i$-colored) \emph{in} $v$ if it has a direction (of color $i$) that is directed toward $v$, and \emph{outgoing} ($i$-colored) \emph{of} $v$ if it has a direction (of color $i$) that is directed away from $v$.\label{def:Schnyderwood1}
	\item For every color~$i$, the half-edge at $r_i$ is unidirected, outgoing and $i$-colored.\label{def:Schnyderwood2}
	\item Every vertex $v$ has exactly one outgoing edge of every color. The outgoing \red-, \green-, \blue-colored edges $e_\red,e_\green,e_\blue$ of $v$ occur in clockwise order around $v$. For every color~$i$, every incoming $i$-colored edge of $v$ is contained in the clockwise sector around $v$ from $e_{i+1}$ to $e_{i-1}$ (see Figure~\ref{fig:SchnyderWoodCondition}).\label{def:Schnyderwood3}
	\item No inner face boundary contains a directed cycle (disregarding possible opposite edge directions) in one color.\label{def:Schnyderwood4}
\end{enumerate}
\end{definition}

Felsner~\cite{Felsner2001} showed that every 3-connected plane graph has a Schnyder wood. Throughout the paper we use red, green and blue synonymously for colors \red, \green\ and \blue, respectively. Denote by $T_i$ the directed graph induced by the directed edges that have color $i$. $T_\red$, $T_\green$ and $T_\blue$ are called the \emph{trees} of the Schnyder wood.

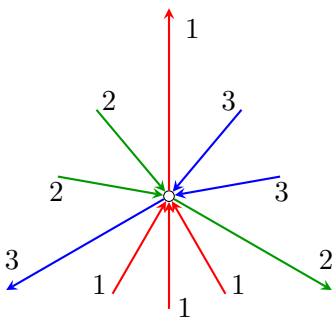
\begin{figure}[!htb]
	\centering
	\begin{tikzpicture}[scale = 0.25]
			\node[rnode] (0) at (0,0) {};
			
			\draw [line, red] (0) to (90:10);
			\draw [line, blue] (0) to (210:10);
			\draw [line, green] (0) to (330:10);
			
			\draw [line, red] (240:6) to (0);
			\draw [line, red] (270:6) to (0);
			\draw [line, red] (300:6) to (0);
			
			\draw [line, blue] (10:6) to (0);
			\draw [line, blue] (50:6) to (0);
			
			\draw [line, green] (130:6) to (0);
			\draw [line, green] (170:6) to (0);
			
			\node[] (1) at (82:9) {1};
			\node[] (2) at (338:9) {2};
			\node[] (3) at (202:9) {3};
			
			\node[] (11) at (232:6) {1};
			\node[] (12) at (278:6) {1};
			\node[] (13) at (308:6) {1};
			
			\node[] (21) at (122:6) {2};
			\node[] (23) at (178:6) {2};
			
			\node[] (31) at (2:6) {3};
			\node[] (32) at (58:6) {3};
			
		\end{tikzpicture}
	\caption{Example for Condition~\ref{def:Schnyderwood}\ref{def:Schnyderwood3} at a vertex in a Schnyder wood. The incoming edges in color $i$ are in the clockwise sector between the outgoing edge in color $i+1$ and the outgoing edge in color $i-1$.}
	\label{fig:SchnyderWoodCondition}
\end{figure}

\begin{figure}[!htb]
	\centering
	\begin{tikzpicture}[scale = 0.6,]
			\node[rnode, label =45:{$r_\red$}] (r1) at (0,10) {};
			\node[rnode, label =below:{$r_\green$}] (r2) at (6,0) {};
			\node[rnode, label =below:{$r_\blue$}] (r3) at (-6,0) {};
			\node[rnode] (1) at (-2,0) {};
			\node[rnode] (2) at (2,0) {};
			\node[rnode] (3) at (-1.5,2) {};
			\node[rnode] (4) at (1.5,2) {};
			\node[rnode] (5) at (-2.2,3.3) {};
			\node[rnode] (6) at (-0.5,3.3) {};
			\node[rnode] (7) at (2.5,3) {};
			\node[rnode] (8) at (0.5,5) {};
			\node[rnode] (9) at (0,8) {};
			\foreach \x/\y in {r3/1, 1/2, 2/r2, 3/4}{
					\draw[line, green] (\x) to ($(\x) !0.5! (\y)$);
					\draw[line, blue] (\y) to ($(\x) !0.5! (\y)$);
				}
			\foreach \x/\y in {r3/r1, 5/9, 1/3, 6/8}{
					\draw[line, red] (\x) to ($(\x) !0.5! (\y)$);
					\draw[line, blue] (\y) to ($(\x) !0.5! (\y)$);
				}
			\foreach \x/\y in {r1/r2, 9/8, 6/4, 5/3, 4/2, 8/7}{
					\draw[line, green] (\x) to ($(\x) !0.5! (\y)$);
					\draw[line, red] (\y) to ($(\x) !0.5! (\y)$);
				}
			\foreach \x/\y in {5/r3, 6/5, 7/4}{
					\draw[line, blue] (\x) to (\y);
				}
			
			\draw[line, green] (7) to (r2);
			\draw[line, red] (9) to (r1);
			
			\draw[line, red] (r1) to (0,10.8);
			\draw[line, green] (r2) to (6.6,-0.4);
			\draw[line, blue] (r3) to (-6.6,-0.4);

		\end{tikzpicture}
	\caption{A Schnyder wood of the suspension of a 3-connected planar graph.}
	\label{fig_example_SW}
\end{figure}
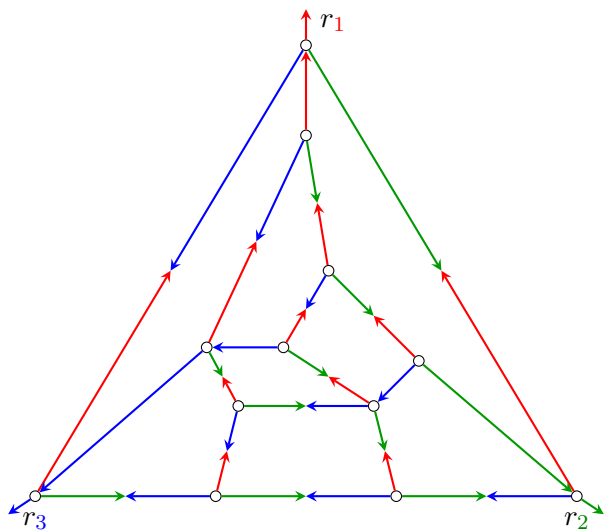

	\paragraph{Dual Schnyder Woods.}
Let $G$ be a 3-connected plane graph. Any Schnyder wood of $G^\sigma$ induces a Schnyder wood of a slightly modified planar dual of $G^\sigma$ in the following way~\cite{BTV99,F04LSFPG} (see~\cite[p.~30]{Kant1996} for an earlier variant of this result given without proof). As common for plane duality, we will use the plane dual operator $^*$ to switch between primal and dual objects, also on sets of objects.

Extend the three half-edges of $G^\sigma$ to non-crossing infinite rays and consider the planar dual of this plane graph. Since the infinite rays partition the outer face $f$ of $G$ into three parts, this dual contains a triangle with vertices $b_\red$, $b_\green$ and $b_\blue$ instead of the outer face vertex $f^*$ such that $b^*_i$ is not incident to $r_i$ for every $i$ (see Figure~\ref{fig:completion}). Let the \emph{suspended dual} $G^{\sigma^*}$ of $G$ be the graph obtained from this dual by adding at each vertex of $\{b_\red,b_\green,b_\blue\}$ a half-edge pointing into the outer face.

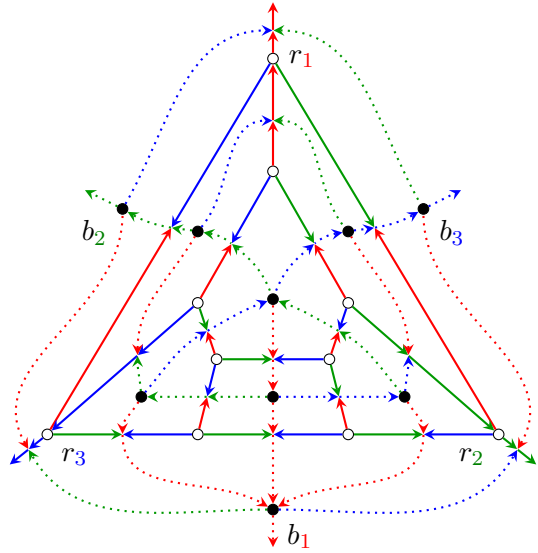
\begin{figure}[!htb]
	\centering
	\begin{tikzpicture}[scale = 0.5]
		\node[rnode, label =right:{$r_\red$}] (r1) at (0,10) {};
		\node[rnode, label =-135:{$r_\green$}] (r2) at (6,0) {};
		\node[rnode, label =-45:{$r_\blue$}] (r3) at (-6,0) {};
		\node[rnode] (1) at (-2,0) {};
		\node[rnode] (2) at (2,0) {};
		\node[rnode] (3) at (-1.5,2) {};
		\node[rnode] (4) at (1.5,2) {};
		\node[rnode] (5) at (-2,3.5) {};
		\node[rnode] (7) at (2,3.5) {};
		\node[rnode] (9) at (0,7) {};
		
		\node[snode, label =-45:{$b_\red$}] (b1) at(0,-2) {};
		\node[snode, label =190:{$b_\green$}] (b2) at(-4,6) {};
		\node[snode, label =-10:{$b_\blue$}] (b3) at(4,6) {};
		\node[snode] (d1) at(-3.5,1) {};
		\node[snode] (d2) at(0,1) {};
		\node[snode] (d3) at(3.5,1) {};
		\node[snode] (d4) at(-2,5.4) {};
		\node[snode] (d5) at(0,3.6) {};
		\node[snode] (d6) at(2,5.4) {};
		
		\foreach \z/\x/\y in {1/r3/1, 2/1/2, 3/2/r2, 4/1/3, 5/2/4, 8/3/5, 9/4/7, m/3/4}{
			\node[cnode] (c\z) at ($(\x) !0.5! (\y)$) {};
		}
		\foreach \z/\x/\y in {6/r3/5, 7/r2/7}{
			\node[cnode] (c\z) at ($(\x) !0.6! (\y)$) {};
		}
		\foreach \z/\x/\y in {12/r3/r1, 13/r2/r1, 10/9/5, 11/9/7, 14/r1/9}{
			\node[cnode] (c\z) at ($(\x) !0.55! (\y)$) {};
		}

		\foreach \x/\y in {c14/r1, 9/c14, r2/c13, r3/c12, 5/c10, 7/c11, 1/c4, 2/c5, 3/c8, 4/c9}{
			\draw[line, red] (\x) to (\y);
		}
		\foreach \x/\y in {c6/r3, 5/c6, r1/c12, 1/c1, 2/c2, r2/c3, 9/c10, 7/c9, 3/c4, 4/cm}{
			\draw[line, blue] (\x) to (\y);
		}
		\foreach \x/\y in {c7/r2, 7/c7, r1/c13, r3/c1, 1/c2, 2/c3, 9/c11, 5/c8, 3/cm, 4/c5}{
			\draw[line, green] (\x) to (\y);
		}

		\draw [dline, green] (c4) to (d1);
		\draw [dline, green] (d2) to (c4);
		\draw [dline, blue] (d2) to (c5);
		\draw [dline, blue] (c5) to (d3);
		\draw [dline, red] (d1) to [in=90, out=-120] (c1);
		\draw [dline, red] (c1) to [in=150, out=-90] (b1);
		\draw [dline, red] (d2) to (c2);
		\draw [dline, red] (c2) to (b1);
		\draw [dline, red] (d3) to [in=90, out=-60] (c3);
		\draw [dline, red] (c3) to [in=30, out=-90] (b1);
		\draw [dline, green] (d1) to [in=-90, out=95] (c6);
		\draw [dline, red, {stealth}-] (c6) to [in=-120, out=90] (d4);
		\draw [dline, blue] (d3) to [in=-90, out=85] (c7);
		\draw [dline, red, {stealth}-] (c7) to [in=-60, out=90] (d6);
		\draw [dline, blue] (d1) to [in=-150, out=50] (c8);
		\draw [dline, blue] (c8) to [in=-160, out=30] (d5);
		\draw [dline, red, {stealth}-] (d2) to (cm);
		\draw [dline, red, {stealth}-] (cm) to (d5);
		\draw [dline, green] (d3) to [in=-30, out=130] (c9);
		\draw [dline, green] (c9) to [in=-20, out=150] (d5);
		\draw [dline, green, {stealth}-] (d4) to [in=140, out=-10] (c10);
		\draw [dline, green, {stealth}-] (c10) to [in=110, out=-40] (d5);
		\draw [dline, blue, {stealth}-] (d6) to [in=40, out=-170] (c11);
		\draw [dline, blue, {stealth}-] (c11) to [in=70, out=-140] (d5);
		\draw [dline, green] (d4) to [in=-10, out=170] (c12);
		\draw [dline, green] (c12) to [in=-30, out=170] (b2);
		\draw [dline, blue] (d6) to [in=-170, out=10] (c13);
		\draw [dline, blue] (c13) to [in=-150, out=10] (b3);
		\draw [dline, blue] (d4) to [in=180, out=50] (c14);
		\draw [dline, green, {stealth}-] (c14) to [in=130, out=0] (d6);
		
		\node[cnode] (cr3) at ($(r3) !0.5! (-7,-0.8)$) {};
		\node[cnode] (cr1) at ($(r1) !0.5! (0,11.5)$) {};
		\node[cnode] (cr2) at ($(r2) !0.5! (7,-0.8)$) {};
		
		\draw [dline, green] (b1) to [in=-60, out=180] (cr3);
		\draw [dline, red, {stealth}-] (cr3) to [in=-90, out=120] (b2);
		\draw [dline, blue] (b1) to [in=-120, out=0] (cr2);
		\draw [dline, red, {stealth}-] (cr2) to [in=-90, out=60] (b3);		
		\draw [dline, blue] (b2) to [in=-180, out=60] (cr1);
		\draw [dline, green, {stealth}-] (cr1) to [in=120, out=0] (b3);		
		
		\draw[line, red] (r1) to (cr1);
		\draw[line, green] (r2) to (cr2);
		\draw[line, blue] (r3) to (cr3);

		\draw[dline, green] (b2) to (-5,6.5);
		\draw[dline, red] (b1) to  (0,-3);
		
		\draw[line, red] (cr1) to (0,11.5);
		\draw[line, green] (cr2) to  (7,-0.8);
		\draw[line, blue] (cr3) to  (-7,-0.8);
		\draw[dline, blue] (b3) to (5,6.5);

		
	\end{tikzpicture}
	\caption{The completion of $G$ obtained by superimposing $G^\sigma$ and its suspended dual $G^{\sigma^*}$ (the latter depicted with dotted edges).}
	\label{fig:completion}
\end{figure}

Consider the superposition of $G^\sigma$ and its suspended dual $G^{\sigma^*}$ such that exactly the primal dual pairs of edges cross (here, for every $\red \leq i \leq \blue$, the half-edge at $r_i$ crosses the dual edge $b_{i-1}b_{i+1}$).

\begin{definition}\label{def:schnyderdual}
	For any Schnyder wood $S$ of $G^\sigma$, define the orientation and coloring $S^*$ of the suspended dual $G^{\sigma^*}$ as follows (see Figure~\ref{fig:completion}):
	\begin{enumerate}
		\item For every unidirected $(i-1)$-colored edge or half-edge $e$ of $G^\sigma$, color $e^*$ with the two colors $i$ and $i+1$ such that $e$ points to the right of the $i$-colored direction.
		\item Vice versa, for every $i$-$(i+1)$-colored edge $e$ of $G^\sigma$, $(i-1)$-color $e^*$ unidirected such that $e^*$ points to the right of the $i$-colored direction.
		\item For every color $i$, make the half-edge at $b_i$ unidirected, outgoing and $i$-colored.\label{def:schnyderdual3}
	\end{enumerate}
\end{definition}

The following lemma states that $S^*$ is indeed a Schnyder wood of the suspended dual. By Definition~\ref{def:schnyderdual}\ref{def:schnyderdual3}, the vertices $b_\red$, $b_\green$ and $b_\blue$ are the roots of $S^*$.

\begin{lemma}[{\cite{Kant1992}\cite[Prop.~3]{F04LSFPG}}]\label{lem:schnyderdual}
	For every Schnyder wood $S$ of $G^\sigma$, $S^*$ is a Schnyder wood of $G^{\sigma^*}$.
\end{lemma}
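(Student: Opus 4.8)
The plan is to verify that $S^*$ satisfies all four conditions of Definition~\ref{def:Schnyderwood}, read on the suspended dual $G^{\sigma^*}$. Two of them are essentially immediate. Condition~\ref{def:Schnyderwood}\ref{def:Schnyderwood2} holds by construction, since Definition~\ref{def:schnyderdual}\ref{def:schnyderdual3} makes the half-edge at each $b_i$ unidirected, outgoing and $i$-colored. For Condition~\ref{def:Schnyderwood}\ref{def:Schnyderwood1}, I would first observe that the two cases of Definition~\ref{def:schnyderdual} genuinely partition the edges of $G^\sigma$: every edge of a Schnyder wood is either unidirected (carrying a single color) or bidirected with two distinct colors, and any two distinct colors modulo $3$ are of the form $i$ and $i+1$ for a \emph{unique} $i$, so the phrase ``$i$-$(i+1)$-colored edge'' in Definition~\ref{def:schnyderdual} is unambiguous. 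Consequently each dual edge $e^*$ receives either a single color or an unordered pair $\{i,i+1\}$ of distinct colors, which is exactly what Condition~\ref{def:Schnyderwood}\ref{def:Schnyderwood1} demands.

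The heart of the argument is Condition~\ref{def:Schnyderwood}\ref{def:Schnyderwood3}, the local rule at each dual vertex. A dual vertex is a face $f$ of $G^\sigma$, and the edges incident to $f^*$ are precisely the duals $e^*$ of the edges $e$ on $\partial f$, appearing around $f^*$ in the rotational order induced by the boundary of $f$. I would translate Definition~\ref{def:schnyderdual} into a statement about a single boundary edge: for $e$ bounding $f$, whether $e^*$ is outgoing of $f^*$, and in which color, is determined by the color and orientation of $e$ together with the side of $e$ on which $f$ lies (the ``to the right of the $i$-colored direction'' clause pins this down). The task is then to show that, summed around $\partial f$, these contributions yield exactly one outgoing dual edge of each color, in clockwise order around $f^*$, with the incoming dual edges of each color confined to the prescribed sector. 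I expect this to be the main obstacle, since it cannot be checked at a single vertex but requires a structural description of how a Schnyder wood colors and orients an entire face boundary. Concretely, I would prove that, traversing $\partial f$, the boundary edges split into three consecutive arcs, one per color, the three transitions between arcs being exactly the edges whose duals are outgoing of $f^*$; this reproduces the ``three outgoing edges in clockwise order'' pattern of the primal vertex rule, now seen on the dual side. I would establish this arc structure by walking along $\partial f$ and, at each corner, invoking the sector structure of Condition~\ref{def:Schnyderwood}\ref{def:Schnyderwood3} for $S$ at the corresponding primal vertex, using the fixed rotation system to glue the local pictures consistently; the sector condition for the incoming dual edges then follows from the same bookkeeping. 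The three outer dual vertices $b_\red,b_\green,b_\blue$, which arise from the three parts into which the rays cut the outer face, must be handled as a boundary case of this analysis, their outgoing half-edges being already supplied by Definition~\ref{def:schnyderdual}\ref{def:schnyderdual3}.

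Finally, Condition~\ref{def:Schnyderwood}\ref{def:Schnyderwood4} concerns the inner faces of $G^{\sigma^*}$, which are the dual faces $v^*$ of the primal vertices $v$. Here I would argue in the reverse direction: the boundary of $v^*$ consists of the duals of the edges incident to $v$, and the local rule of $S$ at $v$ supplies one outgoing primal edge of each color, so the colors and orientations of the surrounding dual edges cannot all coincide. More precisely, the three outgoing edges $e_\red,e_\green,e_\blue$ of $v$, together with the sector structure around $v$, force at least one change of color or of direction among the dual edges bounding $v^*$, which already rules out a monochromatic directed cycle. This step is routine compared with Condition~\ref{def:Schnyderwood}\ref{def:Schnyderwood3}, since it uses only the local picture at a single primal vertex. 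Assembling the four verifications shows that $S^*$ is a Schnyder wood of $G^{\sigma^*}$.
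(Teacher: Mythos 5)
First, a point of reference: the paper does not prove Lemma~\ref{lem:schnyderdual} at all --- it imports the statement from the literature (\cite{Kant1992}, \cite[Prop.~3]{F04LSFPG}), so your attempt has to be judged as a standalone proof rather than against a proof in this text. Within your proposal, the treatment of Conditions~\ref{def:Schnyderwood}\ref{def:Schnyderwood1} and \ref{def:Schnyderwood}\ref{def:Schnyderwood2} is correct, and your idea for Condition~\ref{def:Schnyderwood}\ref{def:Schnyderwood4} can indeed be made precise and stays local: the dual of the outgoing $j$-colored edge at a primal vertex $v$ carries no $j$-colored direction whatsoever (it is either bidirected in the two other colors, or unidirected in the third color), so no $j$-colored directed cycle can close up around the face $v^*$.

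The genuine gap is in Condition~\ref{def:Schnyderwood}\ref{def:Schnyderwood3}, which you rightly call the heart of the matter but only plan rather than prove --- and the plan as stated would fail. You propose to obtain the ``three arcs, three transitions'' structure around a dual vertex $f^*$ by walking along $\partial f$ and gluing the local sector pictures of Condition~\ref{def:Schnyderwood}\ref{def:Schnyderwood3} at the corners. That local information is provably insufficient, because primal Condition~\ref{def:Schnyderwood}\ref{def:Schnyderwood4} is logically necessary for the conclusion and your plan never invokes it. Concretely, consider an inner face $f$ all of whose boundary edges are bidirected \red-\green-colored, with the \red-colored directions forming a directed cycle around $f$; this is exactly the situation Condition~\ref{def:Schnyderwood}\ref{def:Schnyderwood4} forbids, yet it is fully consistent with Conditions~\ref{def:Schnyderwood}\ref{def:Schnyderwood1}--\ref{def:Schnyderwood}\ref{def:Schnyderwood3} at every corner of $f$: at each such vertex the two boundary edges are its outgoing \red- and \green-colored edges, lying consecutively in the rotation with an empty sector of incoming \blue-colored edges between them. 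Applying Definition~\ref{def:schnyderdual} to this configuration, every dual of a boundary edge of $f$ is unidirected \blue-colored and all of them point to the same side of $\partial f$, so $f^*$ either has no outgoing edge at all or has several outgoing \blue-colored edges and none of the other two colors --- the dual vertex rule fails completely. Since your corner-by-corner bookkeeping cannot distinguish this configuration from a genuine Schnyder wood, it cannot yield the dual Condition~\ref{def:Schnyderwood}\ref{def:Schnyderwood3}. Said differently: your implicit winding argument needs the step that the total winding of colors around $\partial f$ is exactly one full turn, not zero or more than one, and that step is precisely where face-acyclicity (Condition~\ref{def:Schnyderwood}\ref{def:Schnyderwood4}) must enter, e.g.\ through the known decomposition of face boundaries of Schnyder woods that this paper itself later uses in the form of \cite[Lemma~12]{BTV99}. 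Supplying that global ingredient is the actual content of the cited proofs of this lemma.
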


Since $S^{*^*} = S$, Lemma~\ref{lem:schnyderdual} gives a bijection between the Schnyder woods of $G^\sigma$ and the ones of $G^{\sigma^*}$.
Let the \emph{completion} $\widetilde{G}$ of $G$ be the plane graph obtained from the superposition of $G^\sigma$ and $G^{\sigma^*}$ by subdividing each pair of crossing (half-)edges with a new vertex, which we call a \emph{crossing vertex} (see Figure~\ref{fig:completion}). The completion has six half-edges pointing into its outer face.

Any Schnyder wood $S$ of $G^\sigma$ implies the following natural orientation and coloring $\widetilde{G}_S$ of its completion $\widetilde{G}$. Let $vw \in E(G^\sigma) \cup E(G^{\sigma^*})$, let $z$ be the crossing vertex of $G^\sigma$ that subdivides $vw$ and consider the coloring of $vw$ in either $S$ or $S^*$. If $vw$ is outgoing of $v$ and $i$-colored, we direct $vz \in E(\widetilde{G})$ toward $z$ and $i$-color it (and do the same for all other vertices than $v$). In the remaining case that $vw$ is unidirected, incoming in $v$ and $i$-colored, we direct $zv \in E(\widetilde{G})$ toward $v$ and $i$-color it. The three half-edges of $G^{\sigma^*}$ inherit the orientation and coloring of $S^*$ for $\widetilde{G}_S$.
By Definition~\ref{def:schnyderdual}, the construction of $\widetilde{G}_S$ implies immediately the following corollary.

\begin{corollary}\label{cor:crossingvertex}
	Every crossing vertex of $\widetilde{G}_S$ has one outgoing edge and three incoming edges and the latter are colored \red, \green\ and \blue\ in counterclockwise direction.
\end{corollary}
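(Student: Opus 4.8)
The plan is to argue locally at a single crossing vertex $z$. By construction, $z$ subdivides exactly one crossing pair, consisting of a primal (half-)edge $e \in E(G^\sigma)$ and its dual $e^* \in E(G^{\sigma^*})$; since the superposition is chosen so that $e$ and $e^*$ cross transversally, $z$ has degree four and the two half-edges inherited from $e^*$ separate the two half-edges inherited from $e$ in the rotation at $z$. I would then split into the two cases provided by Definition~\ref{def:schnyderdual}: either $e$ is unidirected, in which case $e^*$ is bidirected, or $e$ is bidirected, in which case $e^*$ is unidirected.

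The first step is a counting argument that is insensitive to the embedding. A unidirected $j$-colored edge is outgoing of its tail and (unidirected) incoming in its head, so by the construction of $\widetilde{G}_S$ it contributes to $z$ exactly one incoming half-edge (from the tail) and one outgoing half-edge (toward the head), both $j$-colored. A bidirected $j$-$(j{+}1)$-colored edge is outgoing of both endpoints, hence contributes two incoming half-edges at $z$, one $j$-colored and one $(j{+}1)$-colored. Combining the contributions of $e$ and $e^*$ in either case of Definition~\ref{def:schnyderdual} --- a unidirected $(i{-}1)$-colored edge together with a bidirected $i$-$(i{+}1)$-colored edge, or vice versa --- yields in both cases exactly one outgoing half-edge and three incoming half-edges, the latter carrying the three distinct colors $i{-}1$, $i$ and $i{+}1$, i.e. all of \red, \green, \blue. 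This already settles the degree and colour-multiset parts of the statement.

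It remains to pin down the cyclic order, and this is where the embedding --- and the main obstacle, namely getting the handedness right --- enters. Here I would invoke the ``points to the right of the $i$-colored direction'' clauses of Definition~\ref{def:schnyderdual}, which fix the relative placement of $e$ and $e^*$ around $z$. Since both ``to the right of a direction'' and the counterclockwise cyclic order of the half-edges at $z$ are invariant under rotations of the plane, I may rotate so that the distinguished $i$-colored direction points north; translating the orientation rules of $\widetilde{G}_S$ then places the incoming half-edges so that, read counterclockwise, their colors appear in the order $i{+}1, i{-}1, i$. Finally I would check that for every color $i$ the cyclic sequence $(i{+}1, i{-}1, i)$ coincides with the cyclic order $(\red, \green, \blue)$ --- equivalently, that the successor of each term is obtained by adding $1$ modulo $3$ --- which completes the proof. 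The delicate point throughout is the bookkeeping linking ``outgoing/incoming in a Schnyder wood'' to the induced direction of the subdivided half-edge at $z$, and linking ``to the right'' to an actual compass direction; once a single consistent convention is fixed, both cases of Definition~\ref{def:schnyderdual} collapse to the same local picture and the argument is uniform.
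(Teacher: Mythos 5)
Your proposal is correct: the paper gives no explicit proof, stating that the corollary follows \emph{immediately} from Definition~\ref{def:schnyderdual} and the construction of $\widetilde{G}_S$, and your argument is precisely that immediate verification spelled out in detail (the unidirected partner contributes one incoming and one outgoing $j$-colored half-edge, the bidirected partner two incoming ones, and the ``to the right'' clause forces the counterclockwise order $i{+}1,\,i{-}1,\,i$, which is cyclically $\red,\green,\blue$). Both cases of Definition~\ref{def:schnyderdual} indeed collapse to the same local picture, so your bookkeeping matches what the paper leaves implicit.
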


\paragraph{Ordered Path Partitions.}

\begin{definition}\label{def:opp}
	Let $G$ be a 3-connected plane graph with vertices $r_1$, $r_2$ and $r_3$ on the boundary of the outer face in this clockwise order. For any $j \in \{1,2,3\}$, an \emph{ordered path partition} $\mathcal{P} = (P_0, \ldots, P_s)$ of $G$ \emph{with base-pair} $(r_j,r_{j+1})$ is an ordered partition of $V(G)$ into the vertex sets of induced paths (therefore often referred to as paths) such that the following holds for every $i \in \{0, \ldots, s-1\}$, where $V_i := \bigcup_{q=0}^i V(P_q)$ and the \emph{contour} $C_i$ is the clockwise walk from $r_{j+1}$ to $r_j$ on the outer face of $G[V_i]$. 
	\begin{enumerate}
		\item $P_0$ consists of the vertices of the clockwise path from $r_j$ to $r_{j+1}$ on the outer face boundary, and $P_s = \{r_{j+2}\}$.\label{def:opp_init}
		\item Each vertex in $P_i$ has a neighbor in $V(G) \setminus V_i$.\label{def:opp_at_least_one_neighbor}
		\item $C_i$ is a path.\label{def:opp_2connected} 
		\item Each vertex in $C_i$ has at most one neighbor in $P_{i+1}$.\label{def:opp_at_most_one_neighbor}
	\end{enumerate}
\end{definition}

The following lemma describes a connection between Schnyder woods and ordered path partitions. Its proof was first given by Badent et al.~\cite[Theorem~5]{BBC11}, which turned out however to be incomplete, then corrected by Alam et al.~\cite[Lemma~1]{Alam2015}, which however outsourced the proof into the extended abstract~\cite[arXiv version, Section~2.2]{Alam2015}.

\begin{lemma}[{Alam et al.~\cite[arXiv version, Section~2.2]{Alam2015}}]
	Let $S$ be a Schnyder wood of the suspension of a 3-connected plane graph $G$. For every $i \in \{\red,\green,\blue\}$ there exists an ordered path partition $\mathcal{P}^{i,i+1}$ with base pair $(r_i, r_{i+1})$ such that 
	\begin{enumerate}
		\item the paths of $\mathcal{P}^{i,i+1}$ are formed by the maximal $i$-$(i+1)$-colored paths,
		\item the order of $\mathcal{P}^{i,i+1}$ is a linear extension of the partial order defined by $T^{-1}_{i} \cup T^{-1}_{i+1} \cup T_{i+2}$.\label{lem_compatible_consistency}
	\end{enumerate}
	Call such an ordered path partition \emph{compatible}.
	\label{lem_compatible}
\end{lemma}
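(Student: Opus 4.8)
The plan is to read off both the paths and their order directly from the three trees. Fix a color $i$ and consider the digraph $D := T_i^{-1} \cup T_{i+1}^{-1} \cup T_{i+2}$ on $V(G)$, obtained by reversing the edges of $T_i$ and $T_{i+1}$ and keeping those of $T_{i+2}=T_{i-1}$. First I would analyse the bidirected edges. By Condition~\ref{def:Schnyderwood}\ref{def:Schnyderwood1} an $i$-$(i+1)$-colored edge $uv$ carries an $i$-direction, say $u\to v$, and the opposite $(i+1)$-direction $v\to u$; hence in $D$ it contributes both $v\to u$ (from $T_i^{-1}$) and $u\to v$ (from $T_{i+1}^{-1}$), so the endpoints of every $i$-$(i+1)$-colored edge lie in a common strongly connected component of $D$. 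Using Condition~\ref{def:Schnyderwood}\ref{def:Schnyderwood3} (each vertex has exactly one outgoing edge of every color), a vertex $v$ is incident to at most one $i$-$(i+1)$-colored edge whose $i$-direction is outgoing of $v$ and at most one whose $(i+1)$-direction is outgoing of $v$, so the $i$-$(i+1)$-colored edges form vertex-disjoint paths; these are exactly the maximal $i$-$(i+1)$-colored paths, and since they are monotone in both $T_i$ and $T_{i+1}$ and $G$ is simple, they are induced.

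Next I would invoke the acyclicity of $D$ after contracting its bidirected $i$-$(i+1)$-colored edges, which is a standard structural property of Schnyder woods (Felsner~\cite{Felsner2001}). This shows that the strongly connected components of $D$ are \emph{precisely} the maximal $i$-$(i+1)$-colored paths, so the condensation of $D$ is the partial order on these paths referred to in the statement. I would then take $(P_0,\ldots,P_s)$ to be any linear extension of this partial order; by construction its members are induced paths that partition $V(G)$ and satisfy item~\ref{lem_compatible_consistency}, so it remains to verify the four conditions of Definition~\ref{def:opp}.

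For Condition~\ref{def:opp_init} I would use the root structure (Condition~\ref{def:Schnyderwood}\ref{def:Schnyderwood2}) to check that the clockwise outer path from $r_i$ to $r_{i+1}$ is the unique minimal path $P_0$ and that $\{r_{i+2}\}$ is the unique maximal element $P_s$. Condition~\ref{def:opp_at_least_one_neighbor} follows from the order being a linear extension: any vertex $v$ outside $P_s$ has an outgoing $(i-1)$-colored edge $v\to w$ (Condition~\ref{def:Schnyderwood}\ref{def:Schnyderwood3}), and since this edge is not $i$-$(i+1)$-colored its endpoints lie in different components, whence $w$ lies on a strictly later path and is a neighbor of $v$ in $V(G)\setminus V_q$. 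The genuinely technical points are Conditions~\ref{def:opp_2connected} and~\ref{def:opp_at_most_one_neighbor}: that each contour $C_q$ is a simple path and that every contour vertex has at most one neighbor on $P_{q+1}$. Here I would argue by induction on $q$, using planarity together with the prescribed clockwise angular order of the outgoing and incoming edges around each vertex (Condition~\ref{def:Schnyderwood}\ref{def:Schnyderwood3}) to control how the newly added path $P_{q+1}$ attaches to $C_q$, and 3-connectivity to rule out degenerate attachments. I expect this local-to-global bookkeeping on the contour, rather than the acyclicity of $D$, to be the main obstacle.
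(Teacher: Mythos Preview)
The paper does not contain a proof of this lemma. It is stated with an external attribution: the authors explicitly say that the proof ``was first given by Badent et al.~\cite[Theorem~5]{BBC11}, which turned out however to be incomplete, then corrected by Alam et al.~\cite[Lemma~1]{Alam2015},'' and they refer the reader to the arXiv version of~\cite{Alam2015} for the argument. There is therefore nothing in the present paper to compare your proposal against.

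That said, your outline is in the right spirit and matches the way this result is usually established in the literature. Reading off the paths as the strongly connected components of $T_i^{-1}\cup T_{i+1}^{-1}\cup T_{i+2}$, invoking Felsner's acyclicity result for the condensation, and then taking a linear extension is exactly the standard route. Your verification of Definition~\ref{def:opp}\ref{def:opp_init} and~\ref{def:opp_at_least_one_neighbor} is essentially correct; the argument that the outgoing $(i{+}2)$-edge leads to a strictly later component is sound once one knows the components are the maximal $i$-$(i{+}1)$-paths and that these are induced.

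Where your write-up is still only a sketch is precisely where you flag it: Conditions~\ref{def:opp_2connected} and~\ref{def:opp_at_most_one_neighbor}. Showing that each contour $C_q$ is a simple path and that every contour vertex has at most one neighbor in $P_{q+1}$ is the substantive part of the proof in~\cite{Alam2015}, and it requires a careful inductive analysis of how $P_{q+1}$ attaches to $C_q$ using the local edge order of Condition~\ref{def:Schnyderwood}\ref{def:Schnyderwood3}. You have correctly identified this as the main obstacle, but you have not carried it out; if you want a self-contained proof you will need to fill in that induction rather than gesture at it.
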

%
%
%

\section{Our alternative proof}\label{sec_alt_proof}
We give an easier proof of Biedl's theorem using Schnyder woods and essentially use them to define the candidate graph $H(G)$ and show that $H(G)$ meets the statements of Lemma~3, 4 and 5 of Biedl's proof in~\cite{B14}. This way, we can use some standard properties of Schnyder woods, which makes the proof easier to understand.

Observe that the difference of $G^{\sigma^*}$ and $G^*$ is marginal: while $G^*$ contains the vertex that corresponds to the dual of the outer face of $G$, $G^{\sigma^*}$ contains instead three vertices that correspond to the dual vertices of the three unbounded regions of $G^\sigma$. We freely switch from $G^{\sigma^*}$ to $G^*$ by identifying the three roots of the suspended dual.

\begin{definition}
	Let $G$ be a 3-connected plane graph, $S$ be a Schnyder wood of $G^\sigma$ and $\mathcal{P}^{\green,\blue} = (P_0, \ldots , P_s)$ be the compatible ordered path partition formed by the maximal \green-\blue-colored paths. For a path $P_i$ define the \emph{parent path} to be a path $P_j$ with $j < i$ maximal such that there is an edge joining $P_i$ and $P_j$. Define the \emph{parent edge} of $P_i$ to be an edge joining it to its parent path. If there is more than one edge joining $P_i$ to its parent path just choose one. Observe that, by Lemma~\ref{lem_compatible}\ref{lem_compatible_consistency}, the parent edge of a path $P_i$ is incoming \red-colored, outgoing \green-colored or outgoing \blue-colored at a vertex of $P_i$.
	\label{def_parent_edge}
\end{definition}

\begin{definition}
	For a 3-connected plane graph $G$ and a fixed Schnyder wood, define the subgraph $H(G)$ as follows. Let $V(H(G)) := V(G)$ and let an edge $e \in E(G)$ be an edge of $H(G)$ if 
	\begin{enumerate}
		\item [(H1)] $e$ is \green-\blue-colored,
		\item [(H2)] $e$ is the first incoming \blue-colored edge at one of its endpoints in clockwise direction,
		\item [(H3)] $e$ is the last incoming \green-colored edge at one of its endpoints in clockwise direction,
		\item [(H4)] or $e$ is a parent edge and colored with color \red. 
	\end{enumerate}
	With a little abuse of notation, define $H^\circ(G^*)$ as the simple graph that is obtained from $H(G^{\sigma^*})$ by identifying the three root vertices of $G^{\sigma^*}$. Thus $H^\circ(G^*)$ is a subgraph of $G^*$.
	
	Observe that $H(G)$ contains all bidirected edges of $G$. By~(H1), $H(G)$ contains all \green-\blue-colored edges. By (H3), every \red-\green-colored edge is in $E(H(G))$. By (H2), every edge with colors~\red\ and \blue\ is in $E(H(G))$. The same arguments show that $H^\circ(G^*)$ contains all bidirected edges of $G^*$ that remain after the identification of the roots of $G^{\sigma^*}$. 
\end{definition}
\begin{lemma}
	$H(G)$ and $H^\circ(G^*)$ have maximum degree at most 5.
	\label{lem_deg_leq_5}
\end{lemma}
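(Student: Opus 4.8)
The plan is to bound, for an arbitrary vertex $v$, the number of edges of $H(G)$ incident to $v$ using the local structure of the Schnyder wood at $v$. By Definition~\ref{def:Schnyderwood}\ref{def:Schnyderwood3}, $v$ has exactly one outgoing edge of each color, namely $e_\red$, $e_\green$ and $e_\blue$ in clockwise order, and every incoming $i$-colored edge lies in the clockwise sector from $e_{i+1}$ to $e_{i-1}$. In particular the incoming \blue-colored edges lie between $e_\red$ and $e_\green$, and the incoming \green-colored edges between $e_\blue$ and $e_\red$, so that the first incoming \blue-colored edge $f$ and the last incoming \green-colored edge $\ell$ (both taken clockwise) are exactly the two edges flanking $e_\red$. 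I also write $p$ for the \red-colored parent edge incident to $v$ at its child endpoint, if such an edge exists.

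First I would show that every edge of $H(G)$ incident to $v$ lies in $\{e_\red, e_\green, e_\blue, f, \ell, p\}$, so that $\deg_{H(G)}(v)\le 6$ a priori. Indeed, a \green-\blue-colored edge at $v$ is bidirected and hence is either outgoing \green\ (then it equals $e_\green$) or outgoing \blue\ (then it equals $e_\blue$), so~(H1) contributes only edges from $\{e_\green,e_\blue\}$. An edge counted by~(H2) is the first incoming \blue-colored edge at one of its endpoints; if that endpoint is $v$ it is $f$, and otherwise it is outgoing \blue\ at $v$, hence equals $e_\blue$. Symmetrically~(H3) contributes only $\ell$ or $e_\green$, and~(H4) contributes only $p$ (when $v$ is the child endpoint) or $e_\red$ (when $v$ is the parent-side endpoint of a \red-colored parent edge). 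Every remaining incoming edge at $v$ is unidirected of a single color and satisfies none of~(H1)--(H4); here the observation following the definition of $H(G)$, that all bidirected edges are already captured by~(H1)--(H3), keeps the bookkeeping clean.

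The heart of the proof is to improve this to $5$ by exhibiting at every vertex a coincidence among the six candidates, and I would argue by cases on $e_\red$. If $e_\red\notin E(H(G))$ we are done. If $e_\red$ is bidirected, it is incoming in a second color at $v$: being \red-\blue-colored it sits at the start of the incoming-\blue\ sector and therefore equals $f$, while being \red-\green-colored it sits at the end of the incoming-\green\ sector and therefore equals $\ell$; either way two candidates collapse. The delicate remaining case, which I expect to be the main obstacle, is that $e_\red$ is a unidirected \red-colored parent edge distinct from $f$ and $\ell$. Here I would use the compatible ordered path partition of Lemma~\ref{lem_compatible}: an outgoing \red-colored edge points to a later path while an incoming \red-colored edge comes from an earlier path, so $e_\red$ being a parent edge forces $v$'s path to be the parent of a strictly later path. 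I would then show that this is incompatible with $p$, $f$ and $\ell$ all being distinct new edges, intuitively because the earlier-path attachments of $v$ are governed by the linear extension of $T^{-1}_\green\cup T^{-1}_\blue\cup T_\red$ from Lemma~\ref{lem_compatible}\ref{lem_compatible_consistency}, which constrains which incoming edge can serve as the parent edge. Ruling out this full six-edge configuration is the step requiring the most care.

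Finally, for $H^\circ(G^*)$ I would invoke Lemma~\ref{lem:schnyderdual}: since $S^*$ is a genuine Schnyder wood of $G^{\sigma^*}$ and $H(G^{\sigma^*})$ is defined by the same rules, the argument above already gives $\deg_{H(G^{\sigma^*})}(u)\le 5$ for every vertex $u$, in particular for every non-root vertex, which is unaffected by the identification of the three roots $b_\red,b_\green,b_\blue$. It therefore remains only to bound the degree of the single merged vertex of $H^\circ(G^*)$, and I would do this by analysing the edges incident to $b_\red$, $b_\green$ and $b_\blue$ directly, using that each $b_i$ is a root carrying the outgoing $i$-colored half-edge (Definition~\ref{def:schnyderdual}\ref{def:schnyderdual3}) and hence has a very restricted set of incident edges that can satisfy~(H1)--(H4); the triangle edges $b_ib_{i\pm1}$ become loops and drop out in the simple graph $H^\circ(G^*)$. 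This root computation is the second point needing care, since a naive sum over the three roots is far too large and one must show their combined non-root neighborhood in $H(G^{\sigma^*})$ has size at most $5$.
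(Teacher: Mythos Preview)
Your setup and the easy cases match the paper: you correctly identify the six candidates $\{e_\red,e_\green,e_\blue,f,\ell,p\}$, you correctly argue that every $H(G)$-edge at $v$ must be one of them, and your treatment of the bidirected-$e_\red$ case (collapsing it with $f$ or $\ell$) is exactly what happens implicitly in the paper. The paper's proof also reduces to the single configuration where $e_\red$ is a \emph{unidirected} \red-colored parent edge and $f,\ell$ are distinct unidirected edges flanking it.

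The genuine gap is the hard case you flag but do not resolve. Your intuition (``earlier-path attachments of $v$ \ldots\ constrain which incoming edge can serve as the parent edge'') points at $p$, but the contradiction actually goes through $e_\red$, not $p$. The paper argues as follows: since $\ell$, $e_\red=vv_m$, and $f=vv_r$ are consecutive around $v$, there is a face with both $vv_m$ and $vv_r$ on its boundary. Using the face-boundary structure of Schnyder woods (\cite[Lemma~12]{BTV99}), the boundary walk from $v_r$ to the path $P_i\ni v_m$ consists of \red-\blue-colored edges followed by one \red-, \green-, or \red-\green-colored edge, so indices along it are nondecreasing. As $f$ is unidirected incoming \blue\ at $v$, the index of $v_r$ already exceeds that of $v$; hence some vertex of index strictly larger than $v$'s is adjacent to $P_i$, contradicting the maximality in the definition of the parent edge of $P_i$. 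Nothing about $p$ enters. Without this face-boundary step (or an equivalent replacement), your plan does not close.

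For $H^\circ(G^*)$ you are overcomplicating. The paper's argument is one line: after identifying $b_\red,b_\green,b_\blue$, the surviving incident edges of the merged vertex $x$ are all \emph{incoming} (each $b_i$'s outgoing edges are the $i$-colored half-edge and the two triangle edges, which become loops and are discarded). With no outgoing edges, (H1) contributes nothing, and (H2), (H3), (H4) contribute at most one edge each (at $b_\blue$, $b_\green$, $b_\red$ respectively), so $\deg_{H^\circ(G^*)}(x)\le 3$. There is no need to bound a ``combined non-root neighborhood'' by~$5$.
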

\begin{proof}
	First consider $H(G)$. Let $v$ be a vertex of $H(G)$. The three outgoing edges of $v$ are possibly in $H(G)$. Also there is at most one parent edge that is incoming \red-colored at $v$, and rules (H2) and (H3) are responsible for at most one edge each. So $\deg_{H(G)}(v) \leq 6$ for all vertices $v \in V(H)$. Assume, for the sake of contradiction, that there is a vertex $v$ with $\deg_{H(G)}(v) = 6$. Consider the incoming edges $vv_r$ and $vv_l$ of $v$ given by rule (H2) and (H3), respectively, and the outgoing \red-colored edge $vv_m$ of $v$ given by (H4). Since $\deg_{H(G)}(v) = 6$, those edges are pairwise distinct and unidirected (Figure~\ref{fig_deg_leq_5}).
	
	Let $\mathcal{P}^{\green,\blue} = (P_0, \ldots , P_s)$ be the compatible ordered path partition formed by the maximal \green-\blue-colored paths. For every vertex $v \in V(G)$ with $v \in P_t$ define $t$ to be the \emph{index} of $v$. Observe that, by Lemma~\ref{lem_compatible}\ref{lem_compatible_consistency}, for a \red-colored edge the index of its head is larger than the index of its tail (head and tail here with respect to color \red, the edge might also be bidirected.). Also, if an edge is unidirected \green- or \blue-colored, the index of its head is smaller than the index of its tail. And if an edge is \green-\blue-colored, the indices of the endpoints are equal. 
	
	Let $v \in P_j$, $v_l \in P_q$, $v_m \in P_i$ and $v_r \in P_p$. Observe that $vv_m$ is the parent edge of $P_i$. Since $vv_l$ is the last incoming \green-colored edge, $vv_m$ is outgoing \red-colored and $vv_r$ is the first incoming \blue-colored edge, they occur consecutively in this clockwise order around $v$ starting with $vv_l$. Let $f$ be the face that has $v_rv$ and $v_mv$ on its boundary (Figure~\ref{fig_deg_leq_5}). By \cite[Lemma~12]{BTV99}, there exists a path $P$ from $v_r$ to $P_i$ along the boundary of $f$ that consists of \red-\blue-colored edges and an edge which is either unidirected \red-colored, unidirected \green-colored or \red-\green-colored. Those edges are such that color \blue\ and \green\ are directed towards $v_r$ and color \red\ is directed towards $P_i$. So $P$ is non-decreasing in index. Since $vv_r$ is unidirected and ingoing \blue-colored at $v$, $p > j$ by Lemma~\ref{lem_compatible}\ref{lem_compatible_consistency}. Hence, the second to last vertex of $P$ has a higher index than $v$ and is adjacent to a vertex of $P_i$. Thus, $vv_m$ is not the parent edge of $P_i$ by Definition~\ref{def_parent_edge}, which is a contradiction, so that $v$ has degree at most 5.
	
	Consider $H^\circ(G^*)$. By the above arguments, $H(G^{\sigma^*})$ has maximum degree at most 5, so that we only need to consider the dual vertex $x$ of the outer face of $G$. As $x$ results from the identification of the three roots of $G^{\sigma^*}$, $x$ does not have outgoing edges. Hence, $x$ has maximum degree 3, which implies that $H^\circ(G^*)$ has maximum degree at most 5.
\end{proof}

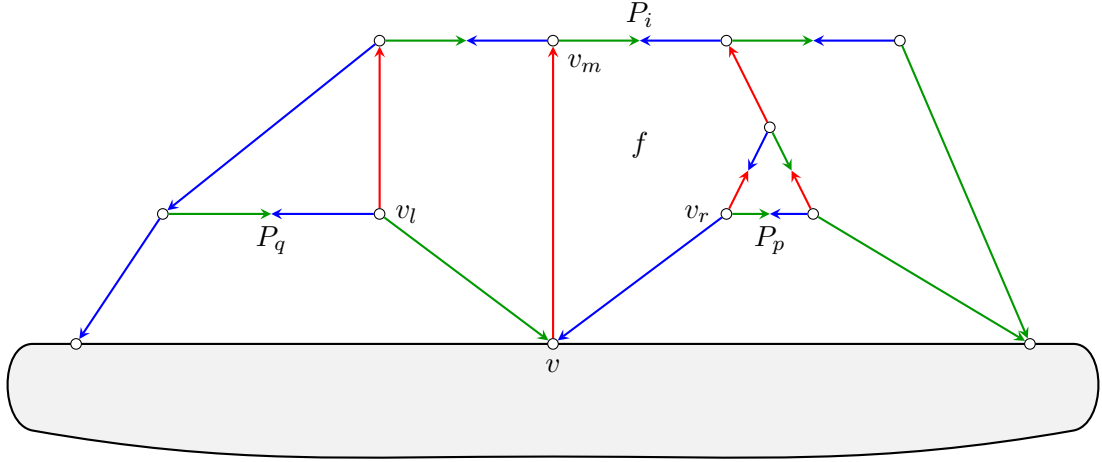
\begin{figure}[!htb]
	\centering
	\begin{tikzpicture}[scale=\textwidth/13cm]
		
		\filldraw[nonarrow, color=black, fill=fillblack] (0,0) to (12,0) to [in=10, out=0] (12,-1) to [in=0, out=-170] (6,-1.3) to [in=-10, out=180] (0,-1) to [in=180, out=170] (0,0);
		\node[] (labelnode2) at (7,3.8) {$P_i$};
		
		\node[] (labelnode2) at (8.5,1.2) {$P_p$};
		\node[] (labelnode2) at (2.75,1.2) {$P_q$};
		\node[] (labelnode2) at (7,2.3) {$f$};

		\node[rnode] (v0) at (1.5,1.5) {};
		\node[rnode] (v1) at (4,3.5) {};
		\node[rnode, label=-45:{$v_m$}] (v2) at (6,3.5) {};
		\node[rnode] (v3) at (8,3.5) {};
		\node[rnode] (v4) at (10,3.5) {};
		\node[rnode] (v5) at (11.5,0) {};
		
		\node[rnode, label=below:{$v$}] (v) at (6,0) {};
		\node[rnode, label=right:{$v_l$}] (vl) at (4,1.5) {};
		\node[rnode, label=left:{$v_r$}] (vr) at (8,1.5) {};
		\node[rnode] (vrr) at (9,1.5) {};
		\node[rnode] (vra) at (8.5,2.5) {};
		\node[rnode] (vbl) at (0.5,0) {};
		
		\foreach \x/\y in {v1/v2, v2/v3, v3/v4, vr/vrr, v0/vl}{
			\draw[line, green] (\x) to ($(\x) !0.5! (\y)$);
			\draw[line, blue] (\y) to ($(\x) !0.5! (\y)$);
		}
		\foreach \x/\y in {v/v2, vra/v3, vl/v1}{
			\draw[line, red] (\x) to (\y);
		}
		\foreach \x/\y in {vl/v, vrr/v5, v4/v5}{
			\draw[line, green] (\x) to (\y);
		}
		\foreach \x/\y in {vr/v, v1/v0, v0/vbl}{
			\draw[line, blue] (\x) to (\y);
		}
		
		\draw[line, red] (vr) to ($(vr) !0.5! (vra)$);
		\draw[line, blue] (vra) to ($(vr) !0.5! (vra)$);
		
		\draw[line, red] (vrr) to ($(vrr) !0.5! (vra)$);
		\draw[line, green] (vra) to ($(vrr) !0.5! (vra)$);

		
	\end{tikzpicture}
	\caption{Illustration of the proof of Lemma~\ref{lem_deg_leq_5}.}
	\label{fig_deg_leq_5}
\end{figure}

\begin{lemma}\label{lem_dual_edge}
	Let $e_1 \in E(G)\setminus E(H(G))$. Then the dual edge $e_1^*$ is in $E(H^\circ(G^*))$. Vice versa, for every edge $e_2 \in E(G^*) \setminus E(H^\circ(G^*))$, we have $e_2^* \in E(H(G))$.
\end{lemma}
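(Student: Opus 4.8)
The plan is to prove Lemma~\ref{lem_dual_edge} by establishing a clean correspondence between the defining rules of $H(G)$ on the primal side and the defining rules on the dual side, using the duality of Schnyder woods from Lemma~\ref{lem:schnyderdual} together with Definition~\ref{def:schnyderdual}. The key observation is that the four rules (H1)--(H4) are phrased entirely in terms of local Schnyder-wood structure (colors, orientations, and the clockwise-first/last incoming edges), and the dual operation $^*$ transforms each such local configuration into another local configuration on $G^{\sigma^*}$. So I would aim to show that an edge $e_1$ \emph{fails} all of (H1)--(H4) in $G$ if and only if its dual $e_1^*$ \emph{satisfies} at least one of (H1)--(H4) in $G^{\sigma^*}$ (before root identification). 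Since the statement is symmetric under $S^{*^*}=S$, proving one direction essentially yields the other.

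First I would classify $e_1 \notin E(H(G))$ by its color type. Because $H(G)$ contains \emph{all} bidirected edges (as noted after the definition of $H^\circ(G^*)$), any $e_1 \notin E(H(G))$ must be a \emph{unidirected} edge, of some color $i \in \{\red,\green,\blue\}$. By Definition~\ref{def:schnyderdual}(1), a unidirected $(i-1)$-colored edge $e_1$ dualizes to an $i$-$(i+1)$-colored (hence bidirected) edge $e_1^*$. Translating the index $i-1$ through the colors, a unidirected \green-colored edge dualizes to a \blue-\red-colored edge, a unidirected \blue-colored edge dualizes to a \red-\green-colored edge, and a unidirected \red-colored edge dualizes to a \green-\blue-colored edge. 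In each case $e_1^*$ is bidirected, and the text after the definition of $H^\circ(G^*)$ records precisely that $H(G^{\sigma^*})$ contains all bidirected edges (\green-\blue-colored by (H1), \red-\green-colored by (H3), \red-\blue-colored by (H2)). Thus once I know $e_1^*$ is bidirected, membership $e_1^* \in E(H(G^{\sigma^*}))$ follows directly from the same membership observations already established.

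The remaining work is to pass from $H(G^{\sigma^*})$ to $H^\circ(G^*)$, i.e.\ to check that the dual edge is not destroyed by identifying the three roots $b_\red,b_\green,b_\blue$ of $G^{\sigma^*}$, and to confirm the color-arithmetic and orientation details of Definition~\ref{def:schnyderdual}(1) so that ``bidirected'' is indeed guaranteed rather than merely plausible. I would spell out the three color cases explicitly (rather than relying on a single generic $i$), verifying that the two directions assigned to $e_1^*$ are distinct colors, and hence that $e_1^*$ is one of the three bidirected types covered by (H1)--(H3). For the root identification, I would note that a bidirected edge of $G^{\sigma^*}$ is an interior edge and is not incident only to the identified vertex in a way that collapses it to a loop; since $e_1$ is an ordinary edge of $G$ (not a half-edge), $e_1^*$ is likewise an ordinary edge of the dual and survives as an edge of $H^\circ(G^*)$. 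The vice-versa direction is obtained symmetrically by applying the same argument to $S^*$ in place of $S$ and using $S^{*^*}=S$, together with $e_2^{*^*}=e_2$.

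The main obstacle I anticipate is bookkeeping the color shift correctly: Definition~\ref{def:schnyderdual}(1) introduces a deliberate rotation (a unidirected $(i-1)$-colored primal edge becomes an $i$-$(i+1)$-colored dual edge), and one must apply the paper's nonstandard modular convention for $i+1$ and $i-1$ consistently to match the specific rules (H1)--(H3), which are stated in fixed colors rather than in the running index $i$. A second, more subtle point is whether every $e_1 \notin E(H(G))$ is truly unidirected: this hinges on the earlier remark that $H(G)$ contains all bidirected edges, so I would make sure to invoke that remark explicitly as the first step, since the entire argument collapses to the bidirected-membership observations once unidirectedness of $e_1$ is secured.
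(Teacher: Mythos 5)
Your proposal is correct and follows essentially the same route as the paper's proof: since $H(G)$ contains all bidirected edges, $e_1$ must be unidirected, its dual $e_1^*$ is then bidirected (the paper cites Corollary~\ref{cor:crossingvertex}, you cite Definition~\ref{def:schnyderdual}, which are equivalent packagings of the same fact), and all bidirected edges surviving the root identification lie in $H^\circ(G^*)$ by the membership observations following the definition of $H(G)$; the converse is symmetric. Your extra care about the color rotation and about edges becoming loops under root identification is sound but not a different argument --- the paper compresses exactly these steps into three sentences.
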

\begin{proof}
	Since all bidirected edges of $G$ are in $H(G)$, $e_1$ is unidirected. By Corollary~\ref{cor:crossingvertex}, $e_1^*$ is bidirected and hence $e_1^* \in E(H^\circ(G^*))$. The same argument also works for $e_2$.
\end{proof}

\begin{lemma}\label{lem:cycle}
	Let $C$ be a cycle in $H(G)$. Then there exists an edge $e_1 \in C$ such that $e_1^* \in E(H^\circ(G^*))$. Vice versa, for every cycle $C'$ in $H^\circ(G^*)$, there is an edge $e_2 \in C'$ such that $e_2^* \in E(H(G))$.
\end{lemma}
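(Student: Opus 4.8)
\noindent
The plan is to read the statement through cut-cycle duality. Recall that for a cycle $C$ of the plane graph $G$, the dual edge set $C^* = \{e^* : e \in C\}$ is a minimal edge cut (bond) of $G^*$ that separates the faces of $G$ enclosed by $C$ from the faces outside $C$; the three dual roots $b_\red, b_\green, b_\blue$, all stemming from the outer face, lie on the outside. Thus it suffices to exhibit one edge of the bond $C^*$ that lies in $H^\circ(G^*)$. The first and main reduction is the observation already used in Lemma~\ref{lem_dual_edge}: by Corollary~\ref{cor:crossingvertex} an edge $e$ of $G$ is unidirected if and only if $e^*$ is bidirected, and every bidirected edge of $G^*$ belongs to $H^\circ(G^*)$. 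Hence, if $C$ contains a unidirected edge $e$, then $e^*$ is bidirected and $e^* \in E(H^\circ(G^*))$, and we are done. The only remaining case is that every edge of $C$ is bidirected, so that $C^*$ consists entirely of unidirected dual edges.

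To treat this case I would analyse the region enclosed by $C$ with the dual Schnyder wood $S^*$ and its compatible \green-\blue\ ordered path partition. Since $C$ is a cycle, the set $I$ of enclosed faces is nonempty. As no bond edge is \green-\blue-colored (bond edges are unidirected, while \green-\blue\ edges are bidirected), no \green-\blue\ dual path crosses $C$, so each path of the dual ordered path partition lies entirely inside or entirely outside $C$. Let $P$ be the inside path of minimum index. Its parent path has strictly smaller index and hence lies outside $C$, so the parent edge of $P$ is a bond edge, i.e.\ the dual of an edge of $C$. By the dual analogue of the observation in Definition~\ref{def_parent_edge} (via Lemma~\ref{lem_compatible}\ref{lem_compatible_consistency}), this parent edge is incoming \red-colored, outgoing \green-colored or outgoing \blue-colored at a vertex of $P$. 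In the first case it is a \red-colored parent edge and thus lies in $H^\circ(G^*)$ by rule (H4), which settles this subcase.

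The whole construction is symmetric under the primal–dual involution $S \mapsto S^*$ (using $S^{**} = S$ and Lemma~\ref{lem:schnyderdual}), so the ``vice versa'' statement for a cycle $C'$ of $H^\circ(G^*)$ follows by running the identical argument on $G^*$ and its dual $G$. Conceptually, both directions amount to the single assertion that $H(G)$ and $H^\circ(G^*)$ are connected spanning subgraphs of $G$ and $G^*$: a connected spanning subgraph meets every bond, and $C^*$ (resp.\ $C'^*$) is a bond. I expect the main obstacle to be precisely the subcase left open above, namely when the crossing parent edge is a unidirected \green- or \blue-colored edge and is therefore not captured by rule (H4). Resolving it requires the clockwise conditions of the Schnyder wood: one must show that such an edge is the last incoming \green-colored edge (rule (H3)) or the first incoming \blue-colored edge (rule (H2)) at its outside endpoint, or else that $P$ is joined to the outside by some other edge obeying (H2)--(H4). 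This rotational bookkeeping, which is ultimately what yields the connectivity of $H^\circ(G^*)$, is the technical heart of the proof.
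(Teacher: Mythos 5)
Your first reduction --- if $C$ contains a unidirected edge, its dual is bidirected by Corollary~\ref{cor:crossingvertex} and hence lies in $E(H^\circ(G^*))$ --- is exactly the paper's first step, and your setup for the remaining case is sound as far as it goes: when all edges of $C$ are bidirected, the bond $C^*$ consists of unidirected edges, so every \green-\blue-colored path of the dual ordered path partition lies wholly inside or wholly outside $C$; the minimum-index inside path $P$ then has its parent path outside, its parent edge is a bond edge, and if that parent edge happens to be \red-colored, rule (H4) finishes. But the subcase you explicitly leave open --- the parent edge being unidirected \green- or \blue-colored --- is not a residual technicality to be tidied up later; it is the entire difficulty of the lemma, and your proposal contains no argument for it. Nothing in the construction forces the parent edge of your inside path to be \red-colored, and when it is not, that edge need not be in $H^\circ(G^*)$ at all, so one must locate a \emph{different} bond edge satisfying (H2) or (H3); that rotational analysis, which you defer, is where all the work lies.

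It is worth seeing how the paper closes this gap, because its anchor is different from yours: it works on the primal side, taking the index-minimal subpath $P=(v_1,\dots,v_k)$ of the cycle $C$ itself (first showing $k\geq 2$), and inspecting the two cycle edges $v_0v_1$ and $v_kv_{k+1}$ at its ends. If $v_0v_1$ is \red-\green-colored or $v_kv_{k+1}$ is \red-\blue-colored, the corresponding dual edge satisfies (H2) respectively (H3) at its head; otherwise further distinctions on the incoming edges at $v_1$ and $v_k$ either again produce an (H2)/(H3) edge, or lead to the one truly hard configuration, in which the dual path $P'$ receiving the \red-colored duals of the edges of $P$ --- a path lying \emph{outside} $C$, so not your path --- is shown to have its parent edge among exactly those duals, hence in $E(H(G^{\sigma^*}))$ by (H4). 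Even that last step is not automatic: it needs \cite[Lemma~12]{BTV99} and an index-monotonicity argument along the faces $v_1^*$ and $v_k^*$ to rule out that the edges joining $y$ and $y'$ to $P'$ are parent edges. None of this is substituted for in your proposal. A secondary gap: your appeal to primal-dual symmetry for the converse direction overlooks the vertex $x$ obtained by identifying the three roots, since a cycle of $H^\circ(G^*)$ through $x$ is not a cycle of $G^{\sigma^*}$; the paper disposes of this case separately by noting that all edges at $x$ are unidirected, so their duals are bidirected and thus in $E(H(G))$.
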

\begin{proof}
	Let $x$ be the dual vertex of the outer face of $G$. If the cycle $C'$ of $H^\circ(G^*)$ contains $x$, then, there is a unidirected edge $e$ on $C'$, as $x$ is incident to only unidirected edges. Since $e^*$ is bidirected, $e^* \in E(H(G))$. Hence, we may assume that $C'$ contains only dual vertices of internal faces. The same arguments apply to $C$ and $C'$, so that we only need to consider $C$ in the following.
	
	If there is an unidirected edge $e \in C$, then $e^*$ is bidirected by Corollary~\ref{cor:crossingvertex} and hence $e^* \in E(H^\circ(G^*))$. So let $C$ be a cycle that has only bidirected edges. Let $\mathcal{P}^{\green,\blue} = (P_0, \ldots , P_s)$ be the compatible ordered path partition formed by the maximal \green-\blue-colored paths.
	
	Let $P$ be the maximum path in $C$ such that $P \subseteq P_i$ with $i = \min \{i \mid P_i \cap C \neq \emptyset \}$; we call $P$ \emph{index minimal}.
	Let $P = (v_1, \ldots, v_k)$ such that $P$ starts at $v_1$ with an edge outgoing in color \green\ and incoming in color \blue\ and continues in counterclockwise direction (using only \green-\blue-colored edges) around $C$ ending at $v_k$. 
	
	Assume, for the sake of contradiction, that $P$ consists of only one vertex $v$. The outgoing \green-colored edge $vw$ and \blue-colored edge $vu$ are both not \green-\blue-colored. Hence, by Lemma~\ref{lem_compatible}\ref{lem_compatible_consistency}, the index of $w$ and $u$ is smaller than the index of $v$. By the index-minimality of $P$, $u$ and $w$ are not in $C$. Thus, the edges of $C$ that are incident to $v$ cannot both be bidirected, which is a contradiction. Hence, $P$ consists of at least two vertices. Let $v_0$ ($v_{k+1}$) be the clockwise (counterclockwise) neighbor of $v_1$ ($v_k$) on $C$.
	
	Since $P$ is index minimal, $v_0v_1$ is not outgoing \blue-colored at $v_1$ and $v_kv_{k+1}$ is not outgoing \green-colored at $v_k$.
	
	\begin{figure}[!htb]
		\centering
		\begin{subfigure}{.8\textwidth}	
			\centering
			\begin{tikzpicture}[scale=\textwidth/7cm]		
				\node[] (labelnode2) at (2.5,-0.3) {$P$};

				\node[rnode, label=below:{$v_1$}] (v1) at (1,0) {};
				\node[rnode] (v2) at (2,0) {};
				\node[rnode] (v3) at (3,0) {};
				\node[rnode, label=below:{$v_k$}] (v4) at (4,0) {};

				\node[rnode, label=right:{$v_{k+1}$}] (v5) at (5,1) {};
				\node[rnode, label=left:{$v_0$}] (v0) at (0,1) {};

				\node[snode, label=above:{$x'$}] (dr) at (3.7,0.8) {};
				\node[snode, label=above:{$x$}] (dl) at (1.3,0.8) {};

				\foreach \x/\y in {v1/v2, v2/v3, v3/v4}{
					\draw[line, green] (\x) to ($(\x) !0.5! (\y)$);
					\draw[line, blue] (\y) to ($(\x) !0.5! (\y)$);
				}
			
				\draw[dline, blue] ($(v0) !0.5! (v1) !-0.5! (dl)$) to (dl);
				
				\draw[dline, green] ($(v4) !0.5! (v5) !-0.5! (dr)$) to (dr);

				\draw[dline, red] (dl) to ($(v1) !0.5! (v2) !-0.5! (dl)$);
				
				\draw[dline, red] (dr) to ($(v3) !0.5! (v4) !-0.5! (dr)$);
				
				\draw[line, red] (v4) to ($(v4) !0.5! (v5)$);
				\draw[line, blue] (v5) to ($(v4) !0.5! (v5)$);
				
				\draw[line, red] (v1) to ($(v1) !0.5! (v0)$);
				\draw[line, green] (v0) to ($(v1) !0.5! (v0)$);

				
			\end{tikzpicture}
			\caption{If there is no other incoming \blue-colored or \green-colored edge at $v_1$ or $v_k$, respectively.}
		\end{subfigure}
		\begin{subfigure}{.8\textwidth}	
			\centering
			\begin{tikzpicture}[scale=\textwidth/7cm]
				\node[] (labelnode2) at (2.5,-0.3) {$P$};
				
				\node[rnode, label=below:{$v_1$}] (v1) at (1,0) {};
				\node[rnode] (v2) at (2,0) {};
				\node[rnode] (v3) at (3,0) {};
				\node[rnode, label=below:{$v_k$}] (v4) at (4,0) {};
				
				\node[rnode, label=right:{$v_{k+1}$}] (v5) at (5,1) {};
				\node[rnode, label=left:{$v_0$}] (v0) at (0,1) {};
				
				\node[snode, label=above:{$x'$}] (dr) at (4.1,0.6) {};
				\node[snode, label=above:{$x$}] (dl) at (0.9,0.6) {};
				
				\foreach \x/\y in {v1/v2, v2/v3, v3/v4}{
					\draw[line, green] (\x) to ($(\x) !0.5! (\y)$);
					\draw[line, blue] (\y) to ($(\x) !0.5! (\y)$);
				}
				
				\draw[dline, blue] ($(v0) !0.5! (v1) !-0.5! (dl)$) to (dl);
				
				\draw[dline, green] ($(v4) !0.5! (v5) !-0.5! (dr)$) to (dr);
				
%
				
				\draw[line, red] (v4) to ($(v4) !0.5! (v5)$);
				\draw[line, blue] (v5) to ($(v4) !0.5! (v5)$);
				
				\draw[line, red] (v1) to ($(v1) !0.5! (v0)$);
				\draw[line, green] (v0) to ($(v1) !0.5! (v0)$);
				
				\draw[line, blue] (1.5,1) to (v1);
				\draw[line, green] (3.5,1) to (v4);
				
				\draw[dline, red] (dl) to ($(1.5,1) !0.5! (v1)$);
				\draw[dline, green] ($(1.5,1) !0.5! (v1) !-1! (dl)$) to ($(1.5,1) !0.5! (v1)$);
				
				\draw[dline, red] (dr) to ($(3.5,1) !0.5! (v4)$);
				\draw[dline, blue] ($(3.5,1) !0.5! (v4) !-1! (dr)$) to ($(3.5,1) !0.5! (v4)$);

			\end{tikzpicture}
			\caption{If there is an incoming \blue-colored or \green-colored edge at $v_1$ or $v_k$, respectively.}
		\end{subfigure}
		\caption{The two possibilities discussed in Case 1 and 2.}
		\label{fig_first_case_breaking_cycle}
	\end{figure}

	\begin{description}
		\item[Case 1:] $v_0v_1$ is \red-\green-colored. Then $(v_0v_1)^*$ is the first incoming \blue-colored edge at its head $x$ in clockwise direction and hence in $E(H(G^*))$ (see Figure~\ref{fig_first_case_breaking_cycle}).
		\item[Case 2:] $v_kv_{k+1}$ is \red-\blue-colored. Similarly to Case 1 we see that $(v_kv_{k+1})^*$ is the last incoming \green-colored edge at its head $x'$ in clockwise direction and hence in $E(H(G^*))$ (see Figure~\ref{fig_first_case_breaking_cycle}).
		
		\begin{figure}[!htb]
			\centering
			\begin{subfigure}{.8\textwidth}	
				\centering
				\begin{tikzpicture}[scale=\textwidth/7cm]		
					\node[] (labelnode2) at (2.5,-0.3) {$P$};
					
					\node[rnode, label=below:{$v_1$}] (v1) at (1,0) {};
					\node[rnode] (v2) at (2,0) {};
					\node[rnode] (v3) at (3,0) {};
					\node[rnode, label=below:{$v_k$}] (v4) at (4,0) {};
					
					\node[rnode, label=right:{$v_{k+1}$}] (v5) at (5,1) {};
					\node[rnode, label=left:{$v_0$}] (v0) at (0,1) {};
					
					\node[snode, label=right:{$x'$}] (dr) at (5,0) {};
					\node[snode, label=left:{$x$}] (dl) at (0,0) {};
					
					\foreach \x/\y in {v1/v2, v2/v3, v3/v4}{
						\draw[line, green] (\x) to ($(\x) !0.5! (\y)$);
						\draw[line, blue] (\y) to ($(\x) !0.5! (\y)$);
					}
					
					\draw[dline, green] ($(v0) !0.5! (v1) !-0.5! (dl)$) to (dl);
					
					\draw[dline, blue] ($(v4) !0.5! (v5) !-0.5! (dr)$) to (dr);
					
					
					
					\draw[line, red] (v4) to ($(v4) !0.5! (v5)$);
					\draw[line, green] (v5) to ($(v4) !0.5! (v5)$);
					
					\draw[line, red] (v1) to ($(v1) !0.5! (v0)$);
					\draw[line, blue] (v0) to ($(v1) !0.5! (v0)$);
					
					\draw[line, green] (0,-0.7) to (v1);
					
					\draw[dline, red] (dl) to ($(0,-0.7) !0.5! (v1)$);
					\draw[dline, blue] ($(0,-0.7) !0.5! (v1) !-1! (dl)$) to ($(0,-0.7) !0.5! (v1)$);
					
					\draw[line, blue] (5,-0.7) to (v4);
					
					\draw[dline, red] (dr) to ($(5,-0.7) !0.5! (v4)$);
					\draw[dline, green] ($(5,-0.7) !0.5! (v4) !-1! (dr)$) to ($(5,-0.7) !0.5! (v4)$);

					
				\end{tikzpicture}
				\caption{Here $v_1$ has an unidirected ingoing \green-colored edge and $v_k$ has an unidirected ingoing \blue-colored edge.}
			\end{subfigure}
			\begin{subfigure}{.8\textwidth}	
				\centering
				\begin{tikzpicture}[scale=\textwidth/7cm]		
					\node[] (labelnode2) at (2.5,-0.3) {$P$};
					
					\node[rnode, label=below:{$v_1$}] (v1) at (1,0) {};
					\node[rnode] (v2) at (2,0) {};
					\node[rnode] (v3) at (3,0) {};
					\node[rnode, label=below:{$v_k$}] (v4) at (4,0) {};
					
					\node[rnode, label=right:{$v_{k+1}$}] (v5) at (5,1) {};
					\node[rnode, label=left:{$v_0$}] (v0) at (0,1) {};
					
					\node[snode, label=right:{$x'$}] (dr) at (5,0) {};
					\node[snode, label=left:{$x$}] (dl) at (0,0) {};
					
					\foreach \x/\y in {v1/v2, v2/v3, v3/v4}{
						\draw[line, green] (\x) to ($(\x) !0.5! (\y)$);
						\draw[line, blue] (\y) to ($(\x) !0.5! (\y)$);
					}
					
					\draw[dline, green] ($(v0) !0.5! (v1) !-0.5! (dl)$) to (dl);
					
					\draw[dline, blue] ($(v4) !0.5! (v5) !-0.5! (dr)$) to (dr);
					
					
					
					\draw[line, red] (v4) to ($(v4) !0.5! (v5)$);
					\draw[line, green] (v5) to ($(v4) !0.5! (v5)$);
					
					\draw[line, red] (v1) to ($(v1) !0.5! (v0)$);
					\draw[line, blue] (v0) to ($(v1) !0.5! (v0)$);
					
					\draw[line, green] (0,-0.7) to ($(v1) !0.5! (0,-0.7)$);
					\draw[line, blue] (v1) to ($(v1) !0.5! (0,-0.7)$);
					
					\draw[dline, red] (dl) to ($(0,-0.7) !0.5! (v1) !-1! (dl)$);
					
					\draw[line, blue] (5,-0.7) to ($(v4) !0.5! (5,-0.7)$);
					\draw[line, green] (v4) to ($(v4) !0.5! (5,-0.7)$);
					
					\draw[dline, red] (dr) to ($(5,-0.7) !0.5! (v4) !-1! (dr)$);

					
				\end{tikzpicture}
				\caption{Here $v_1$ has an ingoing \green-colored edge that is \green-\blue-colored and $v_k$ has an ingoing \blue-colored edge that is \green-\blue-colored.}
			\end{subfigure}
			\caption{Two possibilities discussed in Case 3.}
			\label{fig_third_case_breaking_cycle}
		\end{figure}
	
		\item[Case 3:] If we are neither in Case 1 nor 2, then $v_0v_1$ is \red-\blue-colored and $v_kv_{k+1}$ \red-\green-colored. If the clockwise next edge $e$ around $v_k$ after $v_kv_{k+1}$ is incoming \blue-colored, then $(v_kv_{k+1})^*$ satisfies (H2) at its head $x'$ (see Figure~\ref{fig_third_case_breaking_cycle}). As $v_kv_{k+1}$ is outgoing \red-colored at $v_k$,  $e$ is either unidirected or \green-\blue-colored. 
		Similarly, observe that if the counterclockwise next edge around $v_1$ after $v_0v_1$ is incoming \green-colored, then $(v_0v_1)^*$ satisfies (H3) at its head (see Figure~\ref{fig_third_case_breaking_cycle}).

	\begin{figure}[!htb]
		\centering
		\begin{subfigure}{.8\textwidth}
			\centering
			\begin{tikzpicture}[scale=\textwidth/7cm]		
				\node[] (labelnode2) at (2.2,-0.2) {$P$};
				\node[] (labelnode3) at (1.5,-0.9) {$P'$};
				
				\node[rnode, label=left:{$v_1$}] (v1) at (1,0) {};
				\node[rnode] (v2) at (2,0) {};
				\node[rnode] (v3) at (3,0) {};
				\node[rnode, label=below:{$v_k$}] (v4) at (4,0) {};
				
				\node[rnode, label=right:{$v_{k+1}$}] (v5) at (5,1) {};
				\node[rnode, label=left:{$v_0$}] (v0) at (0,1) {};
				
				\node[snode, label=right:{$y'$}] (dr) at (5,0) {};
				\node[snode, label=left:{$y$}] (dl) at (0,0) {};
				
				\node[snode] (d1) at (0.7,0.6) {};
				\node[snode, label=above:{$x$}] (d2) at (1.5,0.6) {};
				\node[snode] (d3) at (2.5,0.6) {};
				\node[snode] (d4) at (3.5,0.6) {};
				\node[snode] (d5) at (4.3,0.6) {};
				

				\foreach \x/\y in {v1/v2, v2/v3, v3/v4}{
					\draw[line, green] (\x) to ($(\x) !0.5! (\y)$);
					\draw[line, blue] (\y) to ($(\x) !0.5! (\y)$);
				}
				
				\foreach \x/\y in {d2/d3, d3/d4}{
					\draw[dline, blue] (\x) to ($(\x) !0.5! (\y)$);
					\draw[dline, green] (\y) to ($(\x) !0.5! (\y)$);
				}
				
				\draw[line, red] (v2) to ($(v2) + (0,1.5)$);
				\draw[line, red] (v3) to ($(v3) + (0,1.5)$);
				
				\draw[line, blue] ($(v1) + (0,1.5)$) to (v1);
				\draw[line, green] ($(v4) + (0,1.5)$) to (v4);
				
				
				\draw[dline, green] (d1) to [in=30, out=-160] ($(v0) !0.5! (v1)$) to [in=50, out=-150] (dl);
				
				\draw[dline, blue] (d5) to [in=150, out=-20] ($(v4) !0.5! (v5)$) to [in=130, out=-30] (dr);
				
				\draw[dline, red] (d1) to (1,0.6);
				\draw[dline, green] (d2) to (1,0.6);
				
				\draw[dline, red] (d5) to (4,0.6);
				\draw[dline, blue] (d4) to (4,0.6);
				
				
				
				\draw[line, red] (v4) to ($(v4) !0.5! (v5)$);
				\draw[line, green] (v5) to ($(v4) !0.5! (v5)$);
				
				\draw[line, red] (v1) to ($(v1) !0.5! (v0)$);
				\draw[line, blue] (v0) to ($(v1) !0.5! (v0)$);
				
				\node[rnode, label=left:{$w_1$}] (w1) at (0,-0.7) {};
				\draw[line, blue]  (v1) to (w1);
				
				
				\node[rnode, label=right:{$w_k$}] (wk) at (5,-0.7) {};
				\draw[line, green] (v4) to (wk);
				

				\draw[line, red] (1,-1.4) to (v1);
				\draw[line, red] (2.25,-1.4) to (v3);
				\draw[line, red] (3.75,-1.4) to (v3);
				
				\node[snode] (p1) at (0.6,-0.7) {};
				\node[snode] (p2) at (1.75,-0.7) {};
				\node[snode] (p3) at (3,-0.7) {};
				\node[snode] (p4) at (4,-0.7) {};
				
				\draw[dline, blue] (p1) to ($(v1) !0.5! (1,-1.4)$);
				\draw[dline, green] (p2) to ($(v1) !0.5! (1,-1.4)$);
				
				\draw[dline, blue] (p2) to ($(v3) !0.5! (2.25,-1.4)$);
				\draw[dline, green] (p3) to ($(v3) !0.5! (2.25,-1.4)$);
				
				\draw[dline, blue] (p3) to ($(v3) !0.5! (3.75,-1.4)$);
				\draw[dline, green] (p4) to ($(v3) !0.5! (3.75,-1.4)$);
				
				\draw[dline, green] (p1) to [in=-60, out=170] ($(w1) !0.3! (v1)$);
				\draw[dline, red] (dl) to [in=120, out=-60] ($(w1) !0.3! (v1)$);
				
				\draw[dline, blue] (p4) to [in=-130, out=0] ($(wk) !0.3! (v4)$);
				\draw[dline, red] (dr) to [in=50, out=-120] ($(wk) !0.3! (v4)$);
				
				\draw[dline, red] (d2) to ($(v1) !0.5! (v2)$) to [in=140, out=-90] (p2);
				\draw[dline, red] (d3) to ($(v2) !0.5! (v3)$) to [in=30, out=-90] (p2);
				\draw[dline, red] (d4) to ($(v3) !0.5! (v4)$) to [in=140, out=-90] (p4);
				
				\begin{pgfonlayer}{background}
					\draw[nline, line cap=round] (p1.center) to (p4.center);
				\end{pgfonlayer}
				
			\end{tikzpicture}
			\caption{Here, $w_1v_1$ is unidirected \blue-colored and $w_kv_k$ is unidirected \green-colored. }
		\end{subfigure}
		\begin{subfigure}{.8\textwidth}
			\centering
			\begin{tikzpicture}[scale=\textwidth/7cm]		
				\node[] (labelnode2) at (2.2,-0.2) {$P$};
				\node[] (labelnode3) at (1.5,-0.9) {$P'$};
				
				\node[rnode, label=left:{$v_1$}] (v1) at (1,0) {};
				\node[rnode] (v2) at (2,0) {};
				\node[rnode] (v3) at (3,0) {};
				\node[rnode, label=below:{$v_k$}] (v4) at (4,0) {};
				
				\node[rnode, label=right:{$v_{k+1}$}] (v5) at (5,1) {};
				\node[rnode, label=left:{$v_0$}] (v0) at (0,1) {};
				
				\node[snode, label=right:{$y'$}] (dr) at (5,0) {};
				\node[snode, label=left:{$y$}] (dl) at (0,0) {};
				
				\node[snode] (d1) at (0.7,0.6) {};
				\node[snode, label=above:{$x$}] (d2) at (1.5,0.6) {};
				\node[snode] (d3) at (2.5,0.6) {};
				\node[snode] (d4) at (3.5,0.6) {};
				\node[snode] (d5) at (4.3,0.6) {};
				

				\foreach \x/\y in {v1/v2, v2/v3, v3/v4}{
					\draw[line, green] (\x) to ($(\x) !0.5! (\y)$);
					\draw[line, blue] (\y) to ($(\x) !0.5! (\y)$);
				}
				
				\foreach \x/\y in {d2/d3, d3/d4}{
					\draw[dline, blue] (\x) to ($(\x) !0.5! (\y)$);
					\draw[dline, green] (\y) to ($(\x) !0.5! (\y)$);
				}
				
				\draw[line, red] (v2) to ($(v2) + (0,1.5)$);
				\draw[line, red] (v3) to ($(v3) + (0,1.5)$);
				
				\draw[line, blue] ($(v1) + (0,1.5)$) to (v1);
				\draw[line, green] ($(v4) + (0,1.5)$) to (v4);
				
				
				\draw[dline, green] (d1) to [in=30, out=-160] ($(v0) !0.5! (v1)$) to [in=50, out=-150] (dl);
				
				\draw[dline, blue] (d5) to [in=150, out=-20] ($(v4) !0.5! (v5)$) to [in=130, out=-30] (dr);
				
				\draw[dline, red] (d1) to (1,0.6);
				\draw[dline, green] (d2) to (1,0.6);
				
				\draw[dline, red] (d5) to (4,0.6);
				\draw[dline, blue] (d4) to (4,0.6);
				
				\draw[line, red] (v4) to ($(v4) !0.5! (v5)$);
				\draw[line, green] (v5) to ($(v4) !0.5! (v5)$);
				
				\draw[line, red] (v1) to ($(v1) !0.5! (v0)$);
				\draw[line, blue] (v0) to ($(v1) !0.5! (v0)$);
				
				\node[rnode, label=left:{$w_1$}] (w1) at (0,-0.7) {};
				\draw[line, blue]  (v1) to ($(w1) !0.5! (v1)$);
				\draw[line, red]  (w1) to ($(w1) !0.5! (v1)$);
				
				
				\node[rnode, label=right:{$w_k$}] (wk) at (5,-0.7) {};
				\draw[line, green]  (v4) to ($(wk) !0.5! (v4)$);
				\draw[line, red]  (wk) to ($(wk) !0.5! (v4)$);
				
				
				\draw[line, red] (1,-1.4) to (v1);
				\draw[line, red] (2.25,-1.4) to (v3);
				\draw[line, red] (3.75,-1.4) to (v3);
				
				\node[snode] (p1) at (0.6,-0.7) {};
				\node[snode] (p2) at (1.75,-0.7) {};
				\node[snode] (p3) at (3,-0.7) {};
				\node[snode] (p4) at (4,-0.7) {};
				
				\draw[dline, blue] (p1) to ($(v1) !0.5! (1,-1.4)$);
				\draw[dline, green] (p2) to ($(v1) !0.5! (1,-1.4)$);
				
				\draw[dline, blue] (p2) to ($(v3) !0.5! (2.25,-1.4)$);
				\draw[dline, green] (p3) to ($(v3) !0.5! (2.25,-1.4)$);
				
				\draw[dline, blue] (p3) to ($(v3) !0.5! (3.75,-1.4)$);
				\draw[dline, green] (p4) to ($(v3) !0.5! (3.75,-1.4)$);
				
				\draw[dline, green] (p1) to [in=-60, out=90] ($(w1) !0.5! (v1)$) to [in=-30, out=120] (dl);
				
				\draw[dline, blue] (p4) to [in=-130, out=20] ($(wk) !0.5! (v4)$) to [in=-150, out=50] (dr);
				
				\draw[dline, red] (d2) to ($(v1) !0.5! (v2)$) to [in=140, out=-90] (p2);
				\draw[dline, red] (d3) to ($(v2) !0.5! (v3)$) to [in=30, out=-90] (p2);
				\draw[dline, red] (d4) to ($(v3) !0.5! (v4)$) to [in=140, out=-90] (p4);
				
				\begin{pgfonlayer}{background}
					\draw[nline, line cap=round] (p1.center) to (p4.center);
				\end{pgfonlayer}
				
			\end{tikzpicture}
			\caption{Here, $w_1v_1$ is \red-\blue-colored and $w_kv_k$ is \red-\green-colored.}
		\end{subfigure}
		\caption{Case~3 such that $v_1w_1$ and $v_kw_k$ are outgoing \blue-colored and \green-colored, respectively. In general, more than one incoming \red-colored edge at $v_1$ and $v_k$ may occur. $P'$ is highlighted in yellow.}
		\label{fig_third.2_case_breaking_cycle}
	\end{figure}
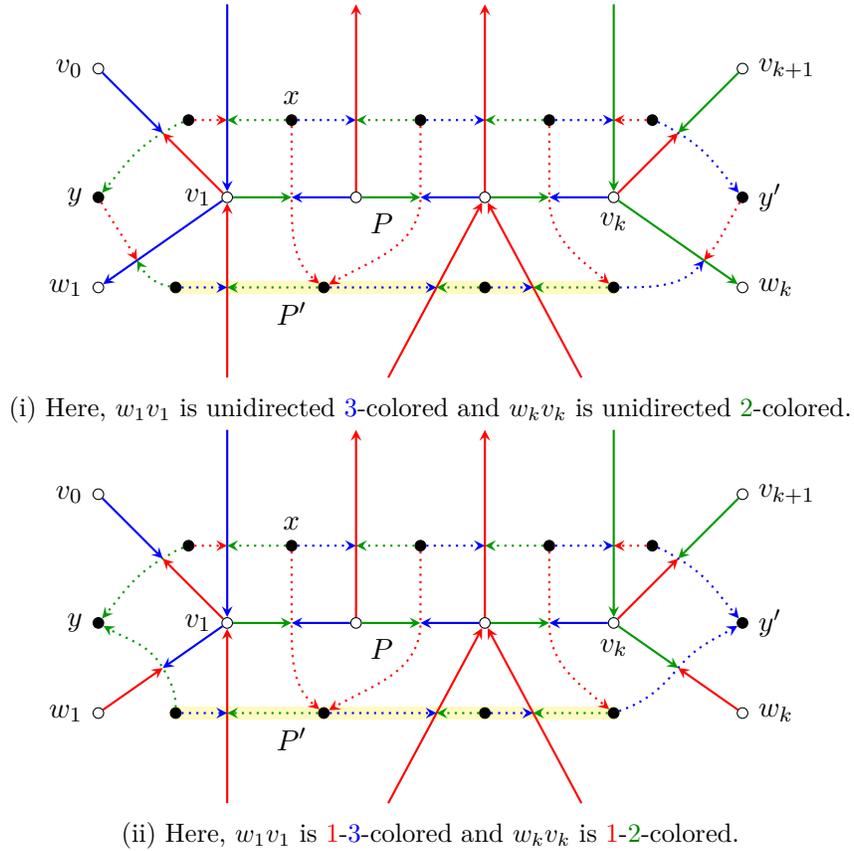
	
		In the remaining case, assume that $v_1$ does not have ingoing \green-colored edges and $v_k$ does not have ingoing \blue-colored edges. We prove that there is an edge of $P$, whose \red-colored dual edge is a parent edge and thus in $E(H(G^*))$ by~(H4). 
		
		Let $v_1w_1$ be the outgoing \blue-colored edge at $v_1$ and $v_kw_k$ be the outgoing \green-colored edge at $v_k$. Let $f$ be the face incident to $v_1$ and counterclockwise of $v_1v_2$ and let $x := f^*$ be its dual vertex. Let $y$ be the dual vertex such that $(v_0v_1)^*$ is incoming \green-colored at $y$ and let $y'$ be the dual vertex such that $(v_kv_{k+1})^*$ is incoming \blue-colored at $y'$ (see Figure~\ref{fig_third.2_case_breaking_cycle}).
		
		By Property~\ref{def:Schnyderwood}\ref{def:Schnyderwood3}, all edges in the clockwise sector from $v_iv_{i+1}$ to $v_iv_{i-1}$ around $v_i$ need to be unidirected incoming \red-colored, for all $i \in \{1,\ldots,k\}$. By Corollary~\ref{cor:crossingvertex}, then the dual compatible ordered path partition given by the maximal \green-\blue-colored paths contains a path $P'$ such that $(v_1v_2)^*, \ldots, (v_{k-1}v_k)^*$ are incoming \red-colored at vertices at $P'$. Also $y$ and $y'$ are the endpoints of the \green-colored and \blue-colored outgoing edge of $P'$, respectively (see Figure~\ref{fig_third.2_case_breaking_cycle}).
		
		Consider the path in $G^*$ from $y$ to $x$ that is incident to the face $v_1^*$ in clockwise direction. This path starts with an incoming \green-colored edge at $y$, and may contain further edges that are \red-\green-colored such that color~\red\ points towards $x$, as the corresponding primal edges incident to $v_1$ are unidirected and of color~\blue\ (see Figure~\ref{fig_third.2_case_breaking_cycle}). Hence, by Lemma~\ref{lem_compatible}\ref{lem_compatible_consistency}, the index of the dual ordered path partition (with maximal \green-\blue-colored paths) along this path increases; in particular, the index of $x$ is larger than the index of $y$. Thus, in the dual ordered path partition, the edge from $y$ to $P'$ is not a parent edge, as $x$ has a larger index than $y$ (which is still lower than the index of $P'$). 
		A symmetric argument implies that the edge from $y'$ to $P'$ is not a parent edge.
		
		Since, by Definition~\ref{def:Schnyderwood}\ref{def:Schnyderwood3} and Lemma~\ref{lem_compatible}\ref{lem_compatible_consistency}, all remaining incident edges of $P'$ that are not dual edges of $P$ lead to larger indices, only the dual edge of one of the edges of $P$ satisfies the requirements of a parent edge. So the parent edge of $P'$ is a dual of an edge of $P$ and thus in $E(H(G^*))$.
	\end{description}
	This completes the proof, as we showed that there is an edge $e$ in $C$ such that $e^* \in E(H(G^*))$.
\end{proof}

The main theorem follows by combining Lemmas~\ref{lem_deg_leq_5}, \ref{lem_dual_edge} and \ref{lem:cycle} as follows.

\begin{theorem}[Biedl \cite{B14}]
	Every 3-connected plane graph contains a spanning tree $T$ such that $T$ and its co-tree both have maximum degree 5.
\end{theorem}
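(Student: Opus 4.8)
The plan is to combine the three preceding lemmas through a short matroid-theoretic argument, which is exactly the content of applying Theorem~3 of~\cite{B14}. The goal is to produce a spanning tree $T$ of $G$ with $E(T)\subseteq E(H(G))$ whose co-tree lies entirely inside $H^\circ(G^*)$. Once such a $T$ is found, Lemma~\ref{lem_deg_leq_5} immediately bounds the maximum degree of $T$ by that of $H(G)$ and the maximum degree of its co-tree by that of $H^\circ(G^*)$, both of which are at most $5$. That the co-tree is a tree at all is the cut-cycle duality recalled in the abstract, so nothing beyond this degree control is needed.

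First I would isolate the edges that are \emph{forced} into the tree. Let $F$ be the set of edges $e\in E(G)$ whose dual $e^*$ does \emph{not} lie in $E(H^\circ(G^*))$. If the co-tree of $T$ is to be contained in $H^\circ(G^*)$, then every edge of $F$ must belong to $T$. Lemma~\ref{lem_dual_edge} shows this demand is consistent with the requirement $E(T)\subseteq E(H(G))$: an edge outside $H(G)$ has its dual in $H^\circ(G^*)$, so contrapositively $F\subseteq E(H(G))$.

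Second I would verify the two matroid conditions that allow $F$ to be extended to the tree we want. On one hand, $F$ is acyclic: a cycle inside $F$ would be a cycle of $H(G)$ all of whose edges have duals outside $H^\circ(G^*)$, directly contradicting Lemma~\ref{lem:cycle}. On the other hand, $H(G)$ is spanning (it contains a spanning tree of $G$); by matroid duality this is equivalent to the dual edges of $E(G)\setminus E(H(G))$ forming a forest in $G^*$, and a cycle among them would be a cycle of $H^\circ(G^*)$ contradicting the second (dual) half of Lemma~\ref{lem:cycle}. Since $F$ is an independent edge set contained in the spanning subgraph $H(G)$, the matroid augmentation property yields a spanning tree $T$ of $G$ with $F\subseteq E(T)\subseteq E(H(G))$.

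It then remains only to read off the conclusion. Because $E(T)\subseteq E(H(G))$, the tree $T$ has maximum degree at most $5$ by Lemma~\ref{lem_deg_leq_5}. Because $F\subseteq E(T)$, every co-tree edge $e\in E(G)\setminus E(T)$ avoids $F$, i.e.\ its dual $e^*$ lies in $E(H^\circ(G^*))$; hence the co-tree is a subgraph of $H^\circ(G^*)$ and again has maximum degree at most $5$ by Lemma~\ref{lem_deg_leq_5}. The only point demanding care is the degenerate behaviour at the vertex obtained from identifying the three roots of $G^{\sigma^*}$, which is already handled inside Lemma~\ref{lem_deg_leq_5}; the genuinely hard work lives entirely in Lemma~\ref{lem:cycle}, whose cycle-breaking statement is precisely what makes both the acyclicity of $F$ and the spanning property of $H(G)$ fall out for free.
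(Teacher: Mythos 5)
Your proposal is correct and takes essentially the same route as the paper: your forced set $F$ coincides (via Lemma~\ref{lem_dual_edge}) with the paper's $H_0$, your matroid-duality argument that $H(G)$ is spanning is the paper's edge-cut/dual-cycle argument for connectivity phrased in matroid language, and your augmentation of $F$ to a basis inside $H(G)$ is exactly what the paper's $0$/$1$/$\infty$-weighted minimum spanning tree computation achieves. The final degree bounds are read off from Lemma~\ref{lem_deg_leq_5} in both cases, so there is no gap.
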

\begin{proof}
	Our graph $H(G)$ has essentially the same properties (Lemmas~\ref{lem_deg_leq_5}, \ref{lem_dual_edge}, \ref{lem:cycle}) as the graph Biedl uses. So we can use straight away the proof of Biedl~\cite[Theorem~3]{B14} replacing her graph $H(G)$ with our graph $H(G)$. For the convenience of the reader, we reiterate her proof in our notation.
	
	We first argue that $H(G)$ is connected. Assume, for the sake of contradiction, that $H(G)$ is disconnected. Then there exists an edge-cut $Z$ with all cut-edges in $E(G) \setminus E(H(G))$. By Lemma~\ref{lem_dual_edge}, $Z^* \subseteq E(H^\circ(G^*))$. An edge-cut in a planar graph corresponds to the union of cycles in the dual graph \cite[Prop. 4.6.1]{Diestel2012}. So $Z^*$ contains a cycle $C$ of edges of $H^\circ(G^*)$. By Lemma~\ref{lem:cycle}, the dual of one edge of $C$ is in $H(G)$, which contradicts the definition of the cut. We conclude that $H(G)$ is connected.
	
	Let $H_0$ be the set of edges of $H(G)$ whose dual edges are not contained in $H^\circ(G^*)$. By Lemma~\ref{lem:cycle}, $H_0$ is a forest. Now we assign weights to the edges in order to compute a minimum weight spanning tree. The edges of $H_0$, $H(G) - H_0$ and the remaining edges of $G - H(G)$ get weight 0, 1 and $\infty$, respectively. Let $T$ be a minimum weight spanning tree of this instance. Since $H_0$ is a forest, all its edges are in $T$. As $H(G)$ is connected, no edge of $G - H(G)$ is in $T$. Hence, $T \subseteq H(G)$ and $T$ has maximum degree at most 5 by Lemma~\ref{lem_deg_leq_5}. The co-tree $\neg T^*$ consists of duals of edges of $G - H_0$. By definition of $H_0$ then these are all in $H^\circ(G^*)$. Thus, $\neg T^*$ has maximum degree 5 as well by Lemma~\ref{lem_deg_leq_5}.
\end{proof}

\section{Conclusion}
We gave an alternative proof of the result of Biedl~\cite{B14}. We are optimistic that the framework of Schnyder woods that we used for this proof turns out to be helpful in the quest towards a solution of Grünbaum's conjecture.

\bibliographystyle{abbrv}
\bibliography{paper}

\begin{thebibliography}{10}

\bibitem{Aerts2015}
N.~Aerts and S.~Felsner.
\newblock Straight-line triangle representations via {S}chnyder labelings.
\newblock {\em J. Graph Algorithms Appl.}, 19(1):467--505, 2015.

\bibitem{Alam2015}
M.~J. Alam, W.~Evans, S.~G. Kobourov, S.~Pupyrev, J.~Toeniskoetter, and
  T.~Ueckerdt.
\newblock Contact representations of graphs in {3D}.
\newblock In {\em Proceedings of the 14th International Symposium on Algorithms
  and Data Structures (WADS '15)}, volume 9214 of {\em Lecture Notes in
  Computer Science}, pages 14--27, 2015.
\newblock Technical Report accessible on arXiv:
  \href{https://arxiv.org/abs/1501.00304}{arxiv.org/abs/1501.00304}.

\bibitem{BBC11}
M.~Badent, U.~Brandes, and S.~Cornelsen.
\newblock More canonical ordering.
\newblock {\em J. Graph Algorithms Appl.}, 15(1):97--126, 2011.

\bibitem{Barnette1966}
D.~Barnette.
\newblock Trees in polyhedral graphs.
\newblock {\em Canadian Journal of Mathematics}, 18:731--736, 1966.

\bibitem{B14}
T.~Biedl.
\newblock Trees and co-trees with bounded degrees in planar 3-connected graphs.
\newblock In {\em Algorithm theory---{SWAT} 2014}, volume 8503 of {\em Lecture
  Notes in Comput. Sci.}, pages 62--73. Springer, Cham, 2014.

\bibitem{BTV99}
G.~{Di B}attista, R.~Tamassia, and L.~Vismara.
\newblock Output-sensitive reporting of disjoint paths.
\newblock {\em Algorithmica}, 23(4):302--340, 1999.

\bibitem{Diestel2012}
R.~Diestel.
\newblock {\em Graph theory}.
\newblock Graduate texts in mathematics 173. Springer, Berlin, 4th edition
  edition, 2012.

\bibitem{Felsner2001}
S.~Felsner.
\newblock Convex drawings of planar graphs and the order dimension of
  3-polytopes.
\newblock {\em Order}, 18(1):19--37, 2001.

\bibitem{F04GGAA}
S.~Felsner.
\newblock {\em Geometric graphs and arrangements}.
\newblock Advanced Lectures in Mathematics. Friedr. Vieweg \& Sohn, Wiesbaden,
  2004.
\newblock Some chapters from combinatorial geometry.

\bibitem{F04LSFPG}
S.~Felsner.
\newblock Lattice structures from planar graphs.
\newblock {\em Electron. J. Combin.}, 11(1):Research Paper 15, 24, 2004.

\bibitem{Gruenbaum1970}
B.~Gr\"unbaum.
\newblock Polytopes, graphs, and complexes.
\newblock {\em Bulletin of the American Mathematical Society},
  76(6):1131--1201, 1970.

\bibitem{Kant1992}
G.~Kant.
\newblock Drawing planar graphs using the lmc-ordering.
\newblock In {\em Proceedings of the 33rd Annual Symposium on Foundations of
  Computer Science (FOCS'92)}, pages 101--110, 1992.

\bibitem{Kant1996}
G.~Kant.
\newblock Drawing planar graphs using the canonical ordering.
\newblock {\em Algorithmica}, 16(1):4--32, 1996.

\bibitem{Schnyder1990}
W.~Schnyder.
\newblock Embedding planar graphs on the grid.
\newblock In {\em Proceedings of the first annual ACM-SIAM symposium on
  Discrete algorithms}, pages 138--148. Society for Industrial and Applied
  Mathematics, 1990.

\end{thebibliography}
\end{document}